 \newtheorem{assumption}{Assumption}
  \newtheorem{Remark}{Remark}
    \newtheorem{Theorem}{Theorem}
     \newtheorem{Corollary}{Corollary}
\def\BibTeX{{\rm B\kern-.05em{\sc i\kern-.025em b}\kern-.08em
    T\kern-.1667em\lower.7ex\hbox{E}\kern-.125emX}}
\begin{document}
	
	\title{Distributed detection of ARMA signals}
	\author{Jo\~ao Domingos and Jo\~ao Xavier
		\thanks{\hspace{-.61cm} Jo\~ao Domingos and Jo\~ao Xavier {\tt\small oliveira.domingos@ist.utl.pt, jxavier@isr.tecnico.ulisboa.pt.}
			are with the Instituto Sistemas e Rob\'otica, Instituto Superior T\'ecnico, Universidade de Lisboa, Portugal.}
	}
	%
	\newtheorem{theorem}{Theorem}
	\newtheorem{lemma}{Lemma}
	\newtheorem{remark}{Remark}
	\newtheorem{example}{Example}
	\newtheorem{result}{Result}
	\newtheorem{corollary}{Corollary}[theorem]
	\maketitle

\begin{abstract}
This paper considers a distributed detection setup where agents in a network want to  detect a time-varying signal  embedded in temporally correlated  noise. In contrast to most previous work, which considers only a constant signal embedded in temporally uncorrelated noise, in this paper both signal and noise can have a general auto-regressive moving average (ARMA) temporal structure. The signal of interest is the impulse response of an ARMA filter, and the noise is the output of yet another ARMA filter which is fed white Gaussian noise. 

For this extended problem setup, which can prompt novel behaviour, we propose a comprehensive solution.
First, we extend the well-known running consensus detector (RCD) to this correlated setup; then, we design an efficient implementation of the RCD by exploiting the underlying ARMA structures; and, finally, we  derive the theoretical asymptotic performance of the RCD in this ARMA setup. It turns out that
the error probability at each agent exhibits one of two regimes: either (a) the error probability decays exponentially fast to zero or  (b) it converges to a strictly positive error floor. While regime (a) spans staple results in large deviation theory, regime (b) is new in distributed detection and is elicited by the  ARMA setup. 
We fully characterize these two scenarios: we give necessary and sufficient conditions, phrased in terms of the zero and poles of the underlying ARMA models, for the emergence of each regime,  and  provide closed-form expressions for both the decay rates of regime (a) and the positive error floors of regime (b). Our analysis also shows that the ARMA setup leads to two novel features: (1) the threshold level used in RCD can influence the asymptotics of the error probabilities and (2) some agents might be \emph{weakly informative}, in the sense that their observations do not improve the asymptotic performance of RCD and, as such, can be safely muted to save sensing  resources. Numerical simulations illustrate and confirm the theoretical findings.
\end{abstract}

\begin{IEEEkeywords}
Distributed Detection, Binary Hypothesis Testing, ARMA, Temporal Correlation, Running Consensus Detector, Cholesky Decomposition.
\end{IEEEkeywords}

\section{Introduction}
\label{sec:introduction}
Binary detection addresses the problem of deciding which statistical source, out of two possible ones, generated some observed data. 
Detection can be carried out in centralized or distributed setups, depending mainly on how the data is observed. 
In centralized setups, a single agent observes the whole data and can address the decision problem by itself. In distributed setups the data is globally observed by multiple agents, but each agent observes only a local portion of it. Because each agent now observes only part of the global data, no single agent can decide by itself in an optimal way; agents are better off addressing the decision problem cooperatively. 

\subsection{Distributed setup}
This paper considers detection in a distributed setup, in which the global data is a discrete-time, vector-valued signal of dimension $N$. Of such stream of global data vectors, each agent~$i$ (with $1 \leq i \leq N)$ observes locally only the component $i$ at each time step. The  agents want to decide between two hypotheses: 
\begin{itemize}
    \item hypothesis $H_1$ (signal+noise): the global data is a given deterministic signal embedded in additive noise;
    
    \item hypothesis $H_0$ (noise only): the global data is just noise,
\end{itemize}
where both the signal and the noise have  pre-specified---but otherwise generic---auto-regressive moving average (ARMA) temporal structures. 

 

\subsection{Related work in distributed setups}

To better situate this paper in the  current landscape of detection for distributed setups, a review of closest work follows.

Broadly speaking, 
detection in distributed setups splits in two main groups, fusion center approaches~\cite{Chi_squared_approximation_2022,change_detection_FC_2022,Limit_Memoryless_Quant_2022,Entropy_Based_Quantization_2022,Narrow_Wide_sensing_2021,Multiplicative_fadding_2021,sparse_signal_detection_2022} and fully decentralized approaches~\cite{Byzantine_Sensors_2021,consensus_sequential_HP_2021,Heterogeneous_networks_2014,Dist_Quickest_Detection_2018,Joint_detection_estimation_2020,Nonlinear_models_2017,non_idd_product_2018,Phase_Transition_for_Communication_Noise_2012,Running_consensus_w_random_networks_2011,Distributed_MAP_2022,Dist_ML_Detection_2018,Kalman_fitler_detection,Least_favourable_densities_2017,Arith_Geo_Fusion_2022,marginals_not_joint_2021,Tutorial_Dist_Learning_2016,activ_HP_2017,robust_test_multi_hypothesis_2018,CISPRT_2016,Improved_CISPRT_2019,Gaussian_shift_in_mean_2015,robust_seq_testing_2018,Distributed_Quantized_2020}:
\begin{itemize}
    \item In fusion center approaches, agents communicate to a central agent some test statistics (often quantized) of their local data. The central agent, upon receiving these statistics,  runs them  through a fusion rule to arrive at the collective decision~\cite{Chi_squared_approximation_2022,change_detection_FC_2022,Limit_Memoryless_Quant_2022,Entropy_Based_Quantization_2022,Narrow_Wide_sensing_2021,Multiplicative_fadding_2021,sparse_signal_detection_2022};
  \item In decentralized approaches,
  such hierarchical architecture is removed by flattening it: no special central agent exists; all agents are on an equal footing and communicate bidirectionally with available neighbor agents to arrive cooperatively at the decision.
  At each time step, each agent updates its local decision by assimilating its newly observed (private) data and by blending the revised local decision with those of its neighbors.
  
   Part of the work on decentralized detection focuses on sequential detection, where observations are collected only until a decision that surpasses acceptable thresholds on its accuracy can be made~\cite{Byzantine_Sensors_2021,consensus_sequential_HP_2021,Heterogeneous_networks_2014,non_idd_product_2018,CISPRT_2016,Improved_CISPRT_2019,Gaussian_shift_in_mean_2015,robust_detect_by_robust_est_2017,Distributed_Quantized_2020,Nonlinear_models_2017}. Other part of the work focuses on non-sequential detection, with composite hypotheses~\cite{Arith_Geo_Fusion_2022,marginals_not_joint_2021,Tutorial_Dist_Learning_2016,activ_HP_2017}.
  \end{itemize}
  ~\\
This paper belongs to the second group of approaches---the fully decentralized ones.
We consider a decentralized, binary detection problem where the signal being detected is known but time-varying, and the observation noise is temporally colored Gaussian noise. Our method uses the running consensus detector (RCD) proposed in~\cite{running_consensus_original_2009} and further considered in~\cite{Running_consensus_w_random_networks_2011}. Despite the rich literature on decentralized detection we found no other work that considers distributed detection with temporally correlated observation noise. The most adjacent work is~\cite{Kalman_fitler_detection}, where Pang et al. consider a decentralized, binary detection problem with temporally correlated (yet random) detection signal; specifically, they study the detection of a partially observed Markov process with i.i.d. white Gaussian noise. 
 
 In this paper we assume that the detection signal is the impulse response of an ARMA model and the observation noise is the output of  another ARMA filter which is fed  white Gaussian noise. ARMA signals have been well studied in the signal processing literature because they are temporal equivalents of rational $\mathcal{Z}$ transforms~\cite{Oppenheim}. Because a vast literature exists on how to approximate a desired frequency response with ARMA models~\cite{digital_filter_design}, the ARMA assumption should not be too restrictive if one can accurately approximate the frequency response of the detection signal and the power spectral density of the noise (assuming wide sense stationarity). One could also employ an alternative approach where the involved parameters are estimated online: in~\cite{known_signal_unknown_gaussian_noise_2} Kay tackles the problem of detecting a known (not ARMA) signal embedded in autoregressive (AR) noise with unknown parameters; it is shown that, asymptotically, the generalized likelihood ratio test (GLRT) is able to match the optimal performance of the log-likelihood ratio (LLR) detector by iteratively updating the autoregressive parameters using maximum likelihood estimates. This work is further generalized in~\cite{random_signal_unknown_gaussian_noise_1}, where the detection signal may also be unknown. 
 
 Let us note, however, that the aforementioned deals with AR processes only and with fully centralized architectures.
While~\cite{known_signal_unknown_gaussian_noise_2,random_signal_unknown_gaussian_noise_1} present tractable centralized detectors when several parameters are unknown, we study the implementation and asymptotic performance of a distributed detector for a known ARMA signal with a known additive ARMA (not necessarily restricited to be an AR) noise. 

\subsection{Contributions}
To the best of our knowledge, this is the first work that considers distributed detection with temporally correlated observation noise. In more detail, we claim three concrete contributions:
\begin{enumerate}
	\item By exploring the structure of ARMA filters, we show that the running consensus detector (RCD) can be efficiently implemented with computational and memory requirements independent of the number of observations;
	\item More importantly, we fully characterize the asymptotic performance of RCD, proving 
	that the probability of false alarm and miss detection can exhibit two asymptotic regimes: (a) the probabilities decay exponentially fast to zero, or (b) they converge to strictly positive error floors. Regime (b) is novel in the distributed context and arises from the generic ARMA setup. We 
	clarify which regime is triggered on the basis of the zeros and poles of the ARMA models at play, and derive closed-form expressions
	for both the exponential decay rates and the error floors; such expressions reveal that the threshold constant used in RCD is irrelevant in regime (a) but highly relevant for the limiting error floors of regime (b).
	
	\item We show that the ARMA setup can also induce \emph{weakly informative} agents. These are agents whose data cannot improve the asymptotic performance of the RCD. Their data can thus be discarded without hurting this asymptotic metric. Such property might be explored in network compression schemes where RCD could run in a (possibly much smaller) connected network obtained by pruning most (if not all)  of the weakly informative agents.
\end{enumerate}

\subsection{Structure of the paper}

The remaining of the paper is organized as follows. Section~\ref{sec:Preliminaries_notation} sets the notation and presents some necessary background on ARMA models and $\mathcal{Z}$-transforms. The distributed detection problem is formalized in section~\ref{sec:problem_formulation}, where we introduce the sensing and communication models. Section~\ref{sec:running_consensus} extends the generic running-consensus detector (RCD) for the ARMA case. Section~\ref{sec:Asymptotic_Detection_Errors} derives the distribution of the additive colored noise and provides an efficient implementation of RCD by using techniques of ARMA filtering. In section~\ref{sec:theory_on_RCD} we analyze the asymptotic performance of RCD. 
Section~\ref{sec:numerical_section} finishes the paper by considering the problem  of detecting a DC level embedded in several classes of colored noise.
\section{Preliminaries: $\mathcal{Z}$ transforms}
\label{sec:Preliminaries_notation}
For future reference, we set notation and review some key properties of rational $\mathcal{Z}$ transforms~\cite{Oppenheim}.
\subsection{Notation}
The index $i=1,\dots,N$ identifies agents in a network, and the index $k=0,1,2,\dots$  represents discrete-time. 

~\\[5pt]
\noindent\textbf{Vectors and matrices.} A plain vector notation such as $z(k)$ represents a \emph{snapshot} of the network at time $k$: $$z(k) = \begin{bmatrix} z_1(k) \\ \vdots \\ z_N(k) \end{bmatrix} \in \mathbf{R}^N,$$
with $z_i(k) \in \mathbf{R}$ being associated with agent~$i$.
A boldface vector notation  $\pmb{z}(k)$ represents the \emph{history} of the network from $0$ until time $k$:
$$\pmb{z}(k) = \begin{bmatrix} \pmb{z}_1(k) \\ \vdots \\ \pmb{z}_N(k) \end{bmatrix} \in \mathbf{R}^{N(k+1)},$$
with $\pmb{z}_i(k) \in \mathbf{R}^{k+1}$ containing the history of agent~$i$,
$$\pmb{z}_i(k) = 
\begin{bmatrix} z_i(0) \\ z_i(1) \\ \vdots \\ z_i(k) \end{bmatrix} \in \mathbf{R}^{k+1}.
$$

Note that the plain $z(k)$ has a fixed dimension (regardless of $k$), whereas the boldface $\pmb{z}_i(k)$ and $\pmb{z}(k)$ have a dimension that grows with $k$.
This boldface principle carries over to matrices.
For example, $\pmb{Z}(k)$ represents an history of the network from $0$ till time $k$, $$\pmb{Z}(k)=\text{diag}\{\pmb{Z}_1(k),\dots,\pmb{Z}_N(k)\}\in \mathbf{R}^{N(k+1)\times N(k+1)},$$
with $\text{diag}\{ \cdot \}$ denoting diagonal stacking of matrices and $\pmb{Z}_i(k) \in \mathbf{R}^{(k+1)\times (k+1)}$ containing the history associated with  agent~$i$. 


~\\[5pt]
\noindent\textbf{Assorted symbols.}
Symbols $\pmb{1}_n$ and $I_n$ denote the $n$- dimensional vector with all components equal to one and the $n \times n$ identity matrix, respectively,  with the subscript $n$ dropped when easily inferred from the context. The Euclidean norm of vectors is $\| \cdot \|$. 
For a $m\times n$ matrix $X$, written in columns as $\begin{bmatrix} x_1 & \cdots & x_n \end{bmatrix}$, the symbol $\text{vec}(X)$ denotes the $mn$-dimensional column vector obtained by stacking the columns of $X$ from left to right: $\text{vec}(X) = \begin{bmatrix} x_1^T & \cdots & x_n^T \end{bmatrix}^T$.

The notation $x \sim  \mathcal{N}( \mu, \Sigma )$ means that the random vector $x$ follows a Gaussian distribution with mean vector $\mu$ and covariance matrix $\Sigma$. The symbol $\mathbb{E}( x )$ denotes the expectation operator applied to a random vector $x$, and $\mathbb{P}( A )$ denotes the probability of an event $A$.

Sequences are often compared through the
big-oh notation $\mathcal{O}( \cdot )$ and the
little-oh notation
$o( \cdot )$. For sequences $\left( f(k) \right)_{k \geq 0}$ and $\left( g(k) \right)_{k \geq 0}$, the notation  $f(k)=\mathcal{O}(g(k))$ means that there exists a constant $M>0$ and an order $k_0$ such that $k \geq k_0$ implies $| f(k) | \leq M | g(k) |$. The notation $f(k)=o(g(k))$ means that the previous constant $M$ can be as small as desired, that is, for any $M > 0$ there is an order $k_0$ such that $k \geq k_0$ implies $| f(k) | \leq M | g(k) |$ (when $g(k) \neq 0$ for $k$ sufficiently large, the notation $f(k)=o(g(k))$ just indicates that $f(k) / g(k) \rightarrow 0$ as $k \rightarrow \infty$).

\subsection{$\mathcal{Z}$ transforms}

 The $\mathcal{Z}$-transform of a 
 sequence $( x(k) )_{k \geq 0}$ is the complex-valued function
\begin{align}
\label{eqn:def_Z_transform}
X(z^{-1})= \sum_{k \geq 0} x(k)\,z^{-k},\quad z\in \mathcal{ROC}\subseteq \mathbf{C},
\end{align}
where $\mathcal{ROC}:=\{z \in \mathbf{C} \colon  \sum_{k \geq 0} |x(k)\,z^{-k}|< \infty \}$ denotes its region of convergence (ROC). In this paper we consider only $\mathcal{Z}$-transforms  that are rational,
\begin{align}
\label{eqn:rational_Z}
X(z^{-1})&= A\, \dfrac{ \prod\limits_{j=1}^q(1-z_j\,z^{-1})}{ \prod\limits_{j=1}^p(1-p_j\,z^{-1})}, \quad |z|>\rho,
\end{align}
with $A$ being a constant, 
the roots $\{z_j\}_{j=1}^q$ of the numerator polynomial being the zeros of $X(z^{-1})$, and  the roots $\{p_j\}_{j=1}^p$ of the denominator polynomial being the poles of  $X(z^{-1})$; in this case, $\mathcal{ROC}:=\{z \colon | z | > \rho \}$, with $\rho := \text{max}\{ | p_j | \colon j \}$.

~\\[5pt]
\textbf{Partial fraction expansion.} 
It is well known that the sequence $\left( x(k) \right)_{k \geq 0}$ can be retrieved back from its $\mathcal{Z}$-transform. For rational $\mathcal{Z}$-transforms such retrieval can be carried out by the standard technique of partial fraction expansion: assuming that $X(z^{-1})$ in~\eqref{eqn:rational_Z} is strictly proper (orders $p,q$ satisfy $p > q$) and has only simple poles ($p_m \neq p_n$ for $m \neq n$), first represent $X( z^{-1} )$ as
  \begin{align}
  \label{eqn:proper_partial_fraction_general}
 X(z^{-1})&=\sum_{j=1}^p \frac{r_j}{1-p_j\,z^{-1}},
 \end{align}
 with residuals $r_j$ given by
 \begin{align}
 \label{eqn:residual_formula_generic}
 r_j &=(1-p_j\,z^{-1})\,X(z^{-1})\Big|_{z=p_j};
 \end{align}
 now, invert each of the $p$ simple fractions to get
 \begin{align}
 x(k) = \sum_{j=1}^{p}\,r_j\,p_j^k, 
 \label{eqn:partial_fraction_decomposition}
 \end{align}
 which shows that the sequence $\left( x(k) \right)_{k \geq 0}$ is a sum of $p$ discrete-time complex exponential signals $p_j^k$.

~\\[5pt]
\textbf{ARMA filters.} Consider an  autoregressive moving average (ARMA) $(p,q)$ filter with  coefficients $\{b(j)\}_{j=1}^p,\{a(j)\}_{j=1}^q$, and let $\left( x(k) \right)_{k \geq 0}$ be an input signal, with $\left( y(k) \right)_{k \geq 0}$ being the corresponding output signal. The input and output signals are related in the time-domain by the linear recursion
\begin{align}
y(k)=\sum\limits_{j=1}^p b(j)\,y(k-j)+x(k)+\sum\limits_{j=1}^q a(j)\,x(k-j),
\label{eqn:ARMA_def}
\end{align}
(with the understanding that $x(k) = 0$ for $k < 0$)
and in the $\mathcal{Z}$-domain by 
\begin{align}
\label{eqn:rational_Z_H}
Y(z^{-1})&=\dfrac{1- \sum\limits_{j=1}^q b(j)\,z^{-j}}{1 + \sum\limits_{j=1}^p a(j)\,z^{-j}} \, X(z^{-1}).
\end{align}
The nomenclature autoregressive moving average appears since, in~\eqref{eqn:ARMA_def}, we have $p$ autoregressive terms of $y(k)$ and $q$ moving average terms of $x(k)$. Equations~\eqref{eqn:ARMA_def} and~\eqref{eqn:rational_Z_H} are equivalent representations of an ARMA$(p,q)$ filter, the  time representation~\eqref{eqn:ARMA_def} being more useful for implementation, and the $\mathcal{Z}$ representation~\eqref{eqn:rational_Z_H} being more useful for analysis.

\section{Problem Formulation}
\label{sec:problem_formulation}
Assume $N$ agents collect local observations, 
whose statistics are shaped differently by two possible states of the world, labelled $H_1$ or $H_0$.
We address the problem of designing a collaborative algorithm that advances through local interactions between agents and leads them to collectively decide which hypothesis, $H_1$ or $H_0$, is in force.

\subsection{Sensing and communication models} 
Specifically, under hypothesis $H_1$, each agent~$i$ observes a noisy version of a known deterministic signal $\theta_i(k)$, while, under hypothesis $H_0$, each agent~$i$ just observes noise:
\begin{align}
H_1&:\enspace y_i(k)=\theta_{i}(k)+n_i(k),\enspace i=1,\dots,N \label{eqn:hypothesis_formulation} \\
H_0&:\enspace y_i(k)=n_i(k),\enspace i=1,\dots,N,
\end{align}
for $k \geq 0$ and $1 \leq i \leq N$.

We assume that both the noise $\left( n_i(k) \right)$
and the  signal $\left( \theta_i(k) \right)$ follow ARMA structures, in the sense that they are modeled as outputs of ARMA filters. For the  noise model, the input of the corresponding filter is $\sigma \, \epsilon_i(k) $, where $\sigma > 0$ and $\left( \epsilon_i(k) \right)$ is white Gaussian noise (WGN) (so,  $\sigma\,\epsilon_i(0),\dots,\sigma\,\epsilon_i(k)\stackrel{i.i.d}{\sim} \mathcal{N}(0,\sigma^2)$); for the signal model, the input of the corresponding filter is $A \, \delta(k)$, where $A \neq 0$ and $\delta$ denotes the unit impulse ($\delta( 0 ) = 1$ and $\delta( k ) = 0$ for $k > 0$). In view of~\eqref{eqn:ARMA_def}, this gives
\begin{align} 
\label{eqn:noise_ARMA}
n_i(k)&=\sum_{j=1}^{p_i} a_{i}(j)\,n_i(k-j) + \sigma\, \epsilon_i(k)+\sigma\,\sum_{j=1}^{q_i} b_{i}(j)\, \epsilon_i(k-j)\,
\\ 
\label{eqn:signal_ARMA} \theta_i(k)&=\sum_{j=1}^{\bar{p}_i}\bar {a}_{i}(j)\,  \theta_i(k-j) + A\,\delta(k)+A\,\sum_{j=1}^{\bar{q}_i}\bar {b}_{i}(j)\, \delta(k-j),
\end{align}
for known $A$, $\sigma$ and filter coefficients $a_i(j), \overline a_i(j), b_i(j), \overline b_i(j)$.

~\\[5pt]
\textbf{Communication model.} 
Besides sensing, agents can also communicate with neighbors by exchanging information over a network. The network is modelled as an undirected graph $G=(V,E)$, where $V=\{1,\dots,N\}$ is the set of agents, and $E$ is the set of communication channels. Edge $(i,j)\in E$ indicates that agents $i$ and $j$ can  exchange information bidirectionally between them. We let $\mathcal{N}_i$ denote the neighbours of agent~$i$: $\mathcal{N}_i:=\{j\in V: \,\,(i,j)\in E  \}$.

Each agent $i$ must decide between hypothesis $H_0$ and $H_1$ by processing its observations $y_i(k)$ sequentially in time $k$ while using the communication network to receive some local decision variables $l_j(k)$ from neighbouring nodes $j\in \mathcal{N}_i$ (how each agent $i$ computes its local variable $l_i(k)$ is detailed in 
Section~\ref{sec:running_consensus}). We assume that the local measurements $y_i(k)$ and the broadcasting  happen synchronously. So, in each time step $k$, agent $i$ simultaneously observes $y_i(k)$, broadcast its local decision variable $l_i(k)$  to neighbours, and receives neighbouring variables $l_j(k)$ for $j\in \mathcal{N}_i$.

\subsection{Assumptions} 
\label{sec:assumptions}
We work under the following three assumptions.
\begin{assumption} (Noise is spatially independent)
	\label{assumption:spatial_uncorrelatedness}
	For each time instant $k$, the additive noise $n_i(k)$ is independent across agents, meaning that the vector $n(k)=\{n_i(k)\}_{i=1}^N$ has independent components. Note that, however, each component $\left( n_i(k) \right)_{k \geq 0}$ can be temporally correlated;
\end{assumption}
\begin{assumption} (Network is connected)
	\label{assumption:connected}
	The communication network $G$ is a connected graph, that is, there exists a path (possibly, with multiple hops) joining any two agents. 
\end{assumption}
This is a common (and necessary) assumption, for, otherwise, the observations of isolated agents could not be taken into account. For further reference, we associate with $G$ a pre-chosen symmetric \emph{weight} matrix $W \in \mathbf{R}^{N\times N}$ that conforms to the sparsity of the network ($W_{i,j} = 0$ if and only if $j\notin \mathcal{N}_i$) and enjoys an eigenvalue decomposition of the form
	\begin{align}
	W= \underbrace{\begin{pmatrix}
	\tfrac{1}{\sqrt{N}}\mathbf{1}_N &  {S}
	\end{pmatrix}}_{Q_W} \underbrace{\begin{pmatrix}
	1 &  \\
	& D
	\end{pmatrix}}_{\Lambda_W} \underbrace{\begin{pmatrix}
	\tfrac{1}{\sqrt{N}}\mathbf{1}_N^T\\
	{S}^T
	\end{pmatrix}}_{Q_W^T}, \label{eqn:mW}
	\end{align}
	where $D \in \mathbf{R}^{(N-1)\times (N-1)}$ is a diagonal matrix with diagonal entries $d_n$ satisfying
	$| d_n | < 1$, for $1 \leq n \leq N-1$, and $S \in \mathbf{R}^{N \times (N-1)}$ has orthonormal columns ($S^T S = I$), each column orthogonal to $\pmb{1}_N$, that is $S^T \pmb{1}_N = 0$.
 
Note that $W$ is therefore both row-stochastic ($W \mathbf{1}_N = \mathbf{1}_N$) and column-stochastic ($\pmb{1}_N^T W = \pmb{1}_N^T$). It is well known that such matrices exist and can even be obtained in a distributed way, for instance, when $W$ is filled with Metropolis weights.
\begin{assumption} (ARMA Models)
	\label{assumption:ARMA_model}
	For each agent $i$, the rational  $\mathcal{Z}$-transform
	\begin{align}
 \label{eqn:equations_freq_domain_signal_assumptions}
	 \dfrac{A}{\sigma}\frac{1-\sum\limits_{j=1}^{q_i} a_{i}(j)\,z^{-j}}{1+\sum\limits_{j=1}^{p_i} b_{i}(j)\,z^{-j}}\, \frac{1+\sum\limits_{j=1}^{\bar{q}_i} \bar{b}_{i}(j)\,z^{-j}}{1-\sum\limits_{j=1}^{\bar{p}_i} \bar{a}_{i}(j)\,z^{-j}}
	\nonumber \\
	\end{align}
 is assumed to have no zero-pole cancellations (this entails no loss of generality). Let $Q_i = q_i + \overline q_i$ and $P_i = p_i + \overline p_i$ be its number of zeros and poles, respectively, with the $P_i$ poles denoted by $\{ p_{i,1}, p_{i,2}, \ldots, p_{i,P_i} \}$. We assume $P_i > Q_i$ (the $\mathcal{Z}$-transform is strictly proper), all poles are simple ($p_{i,j} \neq p_{i,l}$ for $j \neq l$)  and \begin{align}
     1 \geq \rho_i = | p_{i,1} | > | p_{i,2} | \geq \cdots \geq | p_{i,P_i} |. \label{eqn:assum_poles} \end{align}
     So the $\mathcal{Z}$-transform of each agent is stable and the maximum absolute value of its poles, denoted $\rho_i$, is attained by a unique pole (the pole denoted by $p_{i,1}$). In particular, the region of convergence of~\eqref{eqn:equations_freq_domain_signal_assumptions} is $\mathcal{ROC}_i = \left\{ z \in \mathbf{C} \colon | z | > \rho_i \right\}$.

\end{assumption}

\section{Extending the running consensus detector (RCD) to the ARMA setup}
\label{sec:running_consensus}
We propose that agents solve cooperatively the global decision problem by resorting to an extension of the running consensus detector (RCD)~\cite{Running_consensus_w_random_networks_2011,running_consensus_original_2009,Enforcing_Consensus}. Our extension retains the generic template of the standard RCD and works as follows: at each time~$k$,  each agent $i$ forms its updated local decision variable $l_{i}(k)\in \mathbf{R}$  by combining in an affine manner the latest decision variables $l_{j}(k-1)$ communicated by its neighbours $j\in \mathcal{N}_i$ with the history of its own local observations 
$$\pmb{y}_i(k) = 
\begin{bmatrix} y_i(0) \\ y_i(1) \\ \vdots \\ y_i(k) \end{bmatrix} \in \mathbf{R}^{k+1}.
$$
Specifically, stack the local decision variables $l_{i}(k)$, $1 \leq i \leq N$, in a global decision vector $l{(k)} \in \mathbf{R}^N$; we suggest that this vector follow the dynamics
\begin{align}
\label{eqn:running_consensus}
l{(k)}&=W\,l{(k-1)}+\eta{(k)},
\end{align} 
 for $k \geq 0$ (with initial conditions $l(-1)=0$), with $\eta(k) \in \mathbf{R}^N$ being an innovation vector where each component~$i$ depends in an affine manner on the history of local observations of agent~$i$, that is,
\begin{align}
\eta_i(k)=\mathcal{A}_{i,k}\big (  \pmb{y}_i(k) \big).
\label{eqn:affine_mapping}
\end{align} 
Here, $\mathcal{A}_{i,k}:\mathbf{R}^{k+1} \mapsto \mathbf{R}$ is an affine map  to be detailed in the next section~\ref{sec:Asymptotic_Detection_Errors}.  
The dynamics~\eqref{eqn:running_consensus} has a simple interpretation: the first term on the right-hand side blends  local decision variables among neighbors, while the second term blends past observations  of each agent with  own latest ones.

Once each agent~$i$ has formed its updated local decision variable $l_i(k)$, it can immediately use it to produce a local decision $D_i(k) \in \{0, 1\}$ about the global environment: 
	\begin{align}
D_{i}(k)=\begin{cases} 1, & \mbox{if } l_{i}(k)\geq \gamma \\
0, & \mbox{if } l_{i}(k)<\gamma, \end{cases}
\label{eqn:LLR_test_distrbuted} 
\end{align}
 where the threshold $\gamma\in \mathbf{R}$ is a design parameter; intuitively, lowering $\gamma$ biases the decision towards $H_1$ and vice-versa. 
 We discuss the choice of $\gamma$  after establishing the long run performance of detectors~\eqref{eqn:LLR_test_distrbuted} (see the discussion immediately after Theorem~\ref{theorem:energy_theorem}). For now, $\gamma$ is just a fixed parameter.
\section{Implementing efficiently the extended RCD}
\label{sec:Asymptotic_Detection_Errors}
In this section, we first propose an affine map $\mathcal{A}_{i,k}$
to be used in~\eqref{eqn:affine_mapping}; we draw  inspiration from the special case of a network with $N = 1$ agent. Because $\mathcal{A}_{i,k}$ acts on an ever-expanding history vector $\pmb{y}_i(k) \in \mathbf{R}^{k+1}$, a naive implementation would demand an equally ever-expanding memory at each agent and would soon become unrealizable. In this section, we bypass this issue by indicating a feasible implementation that runs with finite memory independent of  time $k$.


 \subsection{Motivating $\mathcal{A}_{i,k}$ through a single-agent network }
 Consider a network with $N = 1$ agent. At time~$k$,  the sole agent~$i = 1$  has the history of observations $\pmb{y}_i(k)$ and must decide between 
 \begin{align}
H_1&:\enspace \pmb{y}_i(k) = \pmb{\theta}_i(k) + \pmb{n}_i(k)\\
H_0&:\enspace\pmb{y}_i(k) = \pmb{n}_i(k),
\label{eqn:dec_single}
\end{align}
 where
 \begin{equation}
     \label{eqn:mdefs}
\pmb{\theta}_i(k) = 
\begin{bmatrix} \theta_i(0) \\ \theta_i(1) \\ \vdots \\ \theta_i(k) \end{bmatrix},  \quad \quad
\pmb{n}_i(k) = 
\begin{bmatrix} n_i(0) \\ n_i(1) \\ \vdots \\ n_i(k) \end{bmatrix}.
\end{equation}
Note that the noise vector $\pmb{n}_i(k)$ is Gaussian-distributed, that is, $\pmb{n}_i(k) \sim \mathcal{N}( 0, \pmb{\Sigma}_i(k) )$, where $\pmb{\Sigma}_i(k) \in \mathbf{R}^{(k+1)\times (k+1)}$ denotes its covariance matrix. Standard results from statistics~\cite{stats_book},~\cite{Kay_Book} show that the optimal detector under a Bayes risk criterion is a threshold on the  log-likelihood function, that is, 
 	\begin{align}
 \label{eqn:NP_detector_N_1}
 D_i(k)=\begin{cases} 1, &\mbox{if } l_i(k) \geq \gamma \\
 0, & \mbox{if } l_i(k) <\gamma, \end{cases}
 \end{align} 
 where 
 \begin{align}
 l_i(k) = \pmb{\theta}_i(k)^T \pmb{\Sigma}_i(k)^{-1} \pmb{y}_i(k) - \pmb{\theta}_i(k)^T \pmb{\Sigma}_i(k)^{-1}\pmb{\theta}_i(k)  / 2,\\
 \label{eqn:l1k}
 \end{align}
 assuming, for the moment, invertibility of matrix $\pmb{\Sigma}_i(k)$ (theorem~\ref{theorem:computation_covariance} will show that such assumption is indeed true).
 A key observation now is that
the sequence $\left( l_i(k) \right)_{k \geq 0}$ of these decision variables can be obtained recursively as follows: let  
 \begin{align}
 \pmb{\Sigma}_i(k)=\pmb{L}_i{(k)}\, \pmb{L}_i{(k)}^T
 \label{eqn:Cholesky}
 \end{align}
 be the Cholesky decomposition~\cite{Horn}, with
 $\pmb{L}_i{(k)}\in \mathbf{R}^{(k+1)\times (k+1)}$ being a lower-triangular matrix with positive diagonal entries; using this decomposition in~\eqref{eqn:l1k} gives
 \begin{equation} l_i(k)  =  \widehat{\pmb{\theta}}_i(k)^T  \widehat{\pmb{y}}_i(k) - \widehat{\pmb{\theta}}_i(k)^T \widehat{\pmb{\theta}}_i(k)  / 2, \label{eqn:wl1k} \end{equation}
 where
 \begin{equation}
     \label{eqn:deftheta}
\widehat{\pmb{\theta}}_i(k) = \pmb{L}_i{(k)}^{-1} \pmb{\theta}_i(k)
\end{equation}
 and 
 \begin{equation}
     \label{eqn:defy}
 \widehat{\pmb{y}}_i(k) = \pmb{L}_i{(k)}^{-1} \pmb{y}_i(k).
 \end{equation}
 Because $\pmb{\Sigma}_i(k-1)$ is nested in $\pmb{\Sigma}_i(k)$ as 
 $$
\pmb{\Sigma}_i{(k)} = \begin{bmatrix} \pmb{\Sigma}_i{(k-1)} & \star \\ \star & \star \end{bmatrix},
$$
the Cholesky factor $\pmb{L}_i{(k-1)}$ is also nested in $\pmb{L}_i{(k)}$ as $$
\pmb{L}_i{(k)} = \begin{bmatrix} \pmb{L}_i{(k-1)} & 0 \\ \star & \star \end{bmatrix}.
$$
Therefore,
  \begin{equation} \widehat{\pmb{\theta}}_i(k)  =  
\begin{bmatrix} \pmb{L}_i{(k-1)}^{-1} \pmb{\theta}_i(k-1) \\ \widehat\theta_i(k) \end{bmatrix}
\label{eqn:wt}
\end{equation} and
\begin{equation} \widehat{\pmb{y}}_i(k)  = 
\begin{bmatrix} \pmb{L}_i{(k-1)}^{-1} \pmb{y}_i(k-1) \\ \widehat y_i(k) \end{bmatrix};
\label{eqn:wy} \end{equation}
 plugging now~\eqref{eqn:wt} and~\eqref{eqn:wy} in~\eqref{eqn:wl1k} yields the recursive form
\begin{equation}
l_i(k) = 
l_i(k-1) + \widehat\theta_i(k)\, \widehat y_i(k) - \widehat\theta_i(k)^2 / 2.
\label{eqn:recl1k}
\end{equation}

Comparing updates~\eqref{eqn:recl1k} and~\eqref{eqn:running_consensus} suggests defining $\mathcal{A}_{i,k}$ as
 \begin{align}
 \label{eqn:mapping_Definition}
\mathcal{A}_{i,k}\big(  \pmb{y}_i{(k)}  \big) = \widehat{
	\theta}_i(k)\,\, \widehat{
	y}_i(k) -\frac{1}{2}\widehat{\theta}_i(k)^2,
 \end{align}
 for now a general $N\geq 1$ and agents $i=1\,\dots,N$. This definition insures that, for $N=1$, RDC exactly recovers the log-likelihood scheme since the weight matrix $W$ in~\eqref{eqn:running_consensus} reduces to $W=1$ when $N=1$. In~\eqref{eqn:mapping_Definition} the scalars $\widehat{\theta}_i(k)$ and 
$\widehat{y}_i(k)$ stand for the last components of the vectors $\widehat{\pmb{\theta}}_i(k)$ and $\widehat{\pmb{y}}_i(k)$ (each of size $k+1$) defined in~\eqref{eqn:deftheta} and~\eqref{eqn:defy}.

In conclusion, defining 
\begin{align} 
\widehat y(k) = 
\begin{bmatrix} \widehat y_1(k) \\ \vdots \\ \widehat y_N(k)  \end{bmatrix}, \quad \widehat\theta(k) = 
\begin{bmatrix} \widehat\theta_1(k) \\ \vdots \\ \widehat\theta_N(k)  \end{bmatrix}, 
\label{eqn:tN} \end{align}
and the $N\times N$ diagonal matrix $\widehat \Theta(k) = \text{diag}( \widehat \theta(k) )$, we have that update~\eqref{eqn:running_consensus} consists in 
\begin{align}
l(k) = W l(k-1) + \widehat \Theta(k)\, \widehat y(k) - \frac{1}{2} \widehat \Theta(k)^2 \pmb{1}.
    \label{eqn:newupdate}
\end{align}

 The remainder of this section presents an efficient way to compute  $\widehat{\theta}_i(k)$ and $\widehat{y}_i(k)$ in a recursive manner. The first step is to obtain the Cholesky factors $ \pmb{L}_i{(k)}$ of~\eqref{eqn:Cholesky}  recursively, and proving that these matrices are invertible since they have positive diagonals. Invertibility of $\pmb{\Sigma}_i{(k)}$ then follows by~\eqref{eqn:Cholesky}.  
 
 \subsection{Obtaining recursively the Cholesky decomposition of $\pmb{\Sigma}_i{(k)}$}

Exploiting signal structures, such as ARMA, to devise recursive algorithms has been a key idea in statistical signal processing, e.g., in filtering based on the innovations process~\cite{kailath2000linear}. For example, in Section 4.4 of~\cite{kailath2000linear} Kailath et.al provide efficient implementations of the L-D-L and Cholesky decompositions of an exponentially correlated process.  Similar approaches can be deployed, here, to recursively obtain the Cholesky factor $\pmb{L}_i(k)$ in~\eqref{eqn:Cholesky}, as the next theorem shows.

 \begin{Theorem}(Recursion for the Cholesky factor)
 	\label{theorem:computation_covariance}
  The Cholesky factor $\pmb{L}_i(k)$ in~\eqref{eqn:Cholesky} is given by
 	 $$\pmb{L}_i(k)=	\sigma\,\big(I_{k+1}-\pmb{A}_{i}{(k)}\big)^{-1}\,\pmb{B}_{i}{(k)},$$
 	where the strictly lower-triangular matrices $\pmb{A}_{i}{(k)}$ and the lower-triangular matrices $\pmb{B}_{i}{(k)}$ obey the recursion
 	\begin{align}
 	\pmb{A}_{i}{(k+1)}&=\begin{pmatrix}
 	\pmb{A}_{i}{(k) }  & 0 \\
 	\pmb{\alpha}_i(k+1)^T &  0    \\
 	\end{pmatrix}, \\ \pmb{B}_{i}(k+1)&=\begin{pmatrix}
 	\pmb{B}_{i}(k)   & 0 \\
 	\pmb{\beta}_i(k+1)^T &  1  
 	\end{pmatrix}.
 	\label{eqn:definition_ARMA_covariance}
 	\end{align}
 	The vectors $\pmb{\alpha}_i(k)$ and $\pmb{\beta}_i(k)$ in~\eqref{eqn:definition_ARMA_covariance} start at $\pmb{\alpha}_i(0)=0$ and $\pmb{\beta}_i(0)=1$ and are given by
 	\begin{align} 
 	\pmb{\alpha}_i(k+1)^T&=\left\{
 	\begin{array}{ll}
 	\Big(a_i(k+1) \, \dots\,  a_i(1)\Big)  &, k+1 \leq p_i \\[5pt]
 	\Big(0_{k+1-p_i}^T\,\, a_i(p_i) \, \dots \, a_i(1)\Big) &,  k+1 > p_i,
 	\end{array}
 	\right. \\[5pt]
 	\pmb{\beta}_i(k+1)^T&=\left\{
 	\begin{array}{ll}
 	\Big(b_i(k+1) \, \dots\,  b_i(1)\Big)  &, k+1 \leq q_i \\[5pt]
 	\Big(0_{k+1-q_i}^T\,\, b_i(q_i) \, \dots \, b_i(1)\Big) &,  k+1 > q_i.
 	\end{array}
 	\right.
 	\label{eqn:definition_ARMA_covariance_updates}
 	\end{align}
 \end{Theorem}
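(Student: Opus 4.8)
The plan is to write the defining ARMA recursion~\eqref{eqn:noise_ARMA} in matrix form over the finite history $\pmb{n}_i(k)=(n_i(0),\dots,n_i(k))^T$ and then read the Cholesky factor off directly. Collecting the $k+1$ scalar recursions for times $0$ through $k$ into a single linear system, and using the convention $n_i(j)=\epsilon_i(j)=0$ for $j<0$, I would show that
$$\big(I_{k+1}-\pmb{A}_i(k)\big)\,\pmb{n}_i(k)=\sigma\,\pmb{B}_i(k)\,\pmb{\epsilon}_i(k),$$
where $\pmb{\epsilon}_i(k)=(\epsilon_i(0),\dots,\epsilon_i(k))^T$, the strictly lower-triangular matrix $\pmb{A}_i(k)$ carries $a_i(j)$ in position $(m,m-j)$, and the unit-diagonal lower-triangular matrix $\pmb{B}_i(k)$ carries $b_i(j)$ in position $(m,m-j)$ with $1$ on the diagonal. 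The crucial point here---and what makes the truncation exact with no boundary artefacts---is that the recursion at time $m$ reaches back only to indices $m-p_i,\dots,m-1$ and $m-q_i,\dots,m$, all inside $\{0,\dots,k\}$ once negative indices are zeroed; the zero-padding in~\eqref{eqn:definition_ARMA_covariance_updates} for small $m$ (the cases $k+1\le p_i$, $k+1\le q_i$) is exactly what accounts for this vanishing transient. The bordered recursions~\eqref{eqn:definition_ARMA_covariance} then follow immediately, since advancing the horizon from $k$ to $k+1$ merely appends the new row encoding the time-$(k+1)$ recursion while leaving the existing block untouched.

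Since $I_{k+1}-\pmb{A}_i(k)$ is unit lower-triangular it is invertible, so I would solve $\pmb{n}_i(k)=\sigma\,(I_{k+1}-\pmb{A}_i(k))^{-1}\pmb{B}_i(k)\,\pmb{\epsilon}_i(k)$. Because the driving vector $\pmb{\epsilon}_i(k)$ consists of i.i.d.\ $\mathcal{N}(0,1)$ entries, its covariance is $I_{k+1}$, and propagating it through the linear map gives
$$\pmb{\Sigma}_i(k)=\sigma^2\,(I_{k+1}-\pmb{A}_i(k))^{-1}\pmb{B}_i(k)\,\pmb{B}_i(k)^T\,(I_{k+1}-\pmb{A}_i(k))^{-T}.$$
This exhibits $\pmb{\Sigma}_i(k)=\pmb{L}_i(k)\,\pmb{L}_i(k)^T$ with the candidate factor $\pmb{L}_i(k)=\sigma\,(I_{k+1}-\pmb{A}_i(k))^{-1}\pmb{B}_i(k)$ claimed by the theorem.

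It then remains to certify that this candidate really is \emph{the} Cholesky factor. I would argue that $(I_{k+1}-\pmb{A}_i(k))^{-1}$ is unit lower-triangular (the inverse of a unit lower-triangular matrix), that $\pmb{B}_i(k)$ is unit lower-triangular, and hence that their product is lower-triangular with unit diagonal; scaling by $\sigma$ makes $\pmb{L}_i(k)$ lower-triangular with constant positive diagonal $\sigma$. By the uniqueness of the Cholesky decomposition of a positive-definite matrix among lower-triangular factors with positive diagonal, $\pmb{L}_i(k)$ is forced to coincide with the Cholesky factor of~\eqref{eqn:Cholesky}, and its invertibility (diagonal $\sigma\neq0$) delivers the invertibility of $\pmb{\Sigma}_i(k)$ asserted in the preceding subsection.

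I expect the only genuinely delicate step to be the bookkeeping of the first paragraph: pinning down the entry patterns of $\pmb{A}_i(k)$ and $\pmb{B}_i(k)$ so that the matrix identity reproduces~\eqref{eqn:noise_ARMA} row by row, including the initial rows where fewer than $p_i$ or $q_i$ past terms are available. Everything downstream---the covariance propagation and the Cholesky identification---is then routine linear algebra.
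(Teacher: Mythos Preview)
Your proposal is correct and follows essentially the same approach as the paper's own proof: write the ARMA recursion~\eqref{eqn:noise_ARMA} in stacked matrix form $(I_{k+1}-\pmb{A}_i(k))\pmb{n}_i(k)=\sigma\,\pmb{B}_i(k)\,\pmb{\epsilon}_i(k)$, invert the unit lower-triangular factor, read off the covariance, and identify the Cholesky factor via the lower-triangular-with-positive-diagonal structure. Your write-up is in fact slightly more explicit than the paper's about the entry patterns, the transient rows, and the appeal to uniqueness of the Cholesky decomposition, but the substance is identical.
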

 \begin{proof}
 	From~\eqref{eqn:noise_ARMA}, vector $\pmb{n}_i(k)$ in~\eqref{eqn:mdefs} satisfies
 	\begin{align}
 ( I_{k+1} - \pmb{A}_i(k) ) \pmb{n}_i(k) = \sigma\,\pmb{B}_i(k)\, \pmb{\epsilon}_i(k),
 	\end{align}
 	with  matrices $\pmb{A}_i(k)$ and $\pmb{B}_i(k)$ defined as above, and 
  $\pmb{\epsilon}_i(k)$ denoting a zero-mean white Gaussian vector: $\pmb{\epsilon}_i(k) \sim \mathcal{N}( 0, I_{k+1} )$. Because $I_{k+1}-\pmb{A}_i(k)$ is a lower triangular matrix with ones on the main diagonal, its inverse  exists (and is also lower triangular with ones in the main diagonal). This yields
\begin{align}
  \pmb{n}_i(k) = \sigma\,( I_{k+1} - \pmb{A}_i(k) )^{-1} \pmb{B}_i(k)\, \pmb{\epsilon}_i(k).
 	\end{align}
  Thus, the covariance of $\pmb{n}_i(k)$ is
$\pmb{\Sigma}_i(k) = \pmb{L}_i{(k)}\, \pmb{L}_i{(k)}^T$, with $\pmb{L}_i(k)=	\sigma\,\big(I_{k+1}-\pmb{A}_{i}{(k)}\big)^{-1}\,\pmb{B}_{i}{(k)}$, as claimed. Finally, note that $\pmb{L}_i{(k)}$, being the product of two lower-triangular matrices with positive diagonal entries, is itself such a matrix; thus the mentioned $\pmb{L}_i{(k)}$ is indeed the Cholesky factor of $\pmb{\Sigma}_i(k)$.
 \end{proof}
\subsection{Implementing efficiently the map $\mathcal{A}_{i,k}$ }

In implementing the map $\mathcal{A}_{i,k}$  in~\eqref{eqn:mapping_Definition}, the main computational steps are~\eqref{eqn:deftheta} and~\eqref{eqn:defy}, both of which consist in the generic operation
\begin{equation}
     \label{eqn:gop}
\pmb{z}_i(k) = 
\begin{bmatrix} z_i(0) \\ z_i(1) \\ \vdots \\ z_i(k) \end{bmatrix} \quad \mapsto \quad \widehat{\pmb{z}}_i(k) = 
\begin{bmatrix} \widehat z_i(0) \\ \widehat z_i(1) \\ \vdots \\ \widehat z_i(k) \end{bmatrix},
\end{equation}
where $\widehat{\pmb{z}}_i(k) = \pmb{L}_i{(k)}^{-1} \pmb{z}_i(k)$. Were we to store and invert explicitly the ever-growing matrix $\pmb{L}_i{(k)} \in \mathbf{R}^{(k+1)\times (k+1)}$  at each time step $k$, we would soon encounter memory and computational limits. The next theorem shows that $\widehat{\pmb{z}}_i(k)$ can be efficiently obtained from $\pmb{z}_i(k)$ through ARMA filtering.

\begin{Theorem}(ARMA filter interpretation)
\label{theorem:convolution_structure}
The entries of the vector $ \widehat{\pmb{z}}_i(k)$, defined as $\pmb{L}_i{(k)}^{-1} \pmb{z}_i(k)$, are the first $k+1$ output samples of an ARMA$(p_i,q_i)$ filter
which is fed $\pmb{z}_i(k)$, i.e.,
\begin{align}
\label{eqn:theorem_convolution_conclusion}
\widehat{z}_i(k)&=-\sum_{j=1}^{q_i} {b_{i}(j)}{}\,\widehat{z}_i(k-j) + \frac{1}{\sigma }z_i(k)-\sum_{j=1}^{p_i} \frac{a_{i}(j)}{\sigma}\, z_i(k-j),
\end{align}
with zero initial conditions $\widehat{z}_i(k) = z_i(k)=0$ for $k < 0$.
\end{Theorem}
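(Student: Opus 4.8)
The plan is to start from the factorization of the Cholesky factor established in Theorem~\ref{theorem:computation_covariance} and simply invert it. Since $\pmb{L}_i(k)=\sigma\,(I_{k+1}-\pmb{A}_i(k))^{-1}\pmb{B}_i(k)$, with $I_{k+1}-\pmb{A}_i(k)$ unit lower-triangular and $\pmb{B}_i(k)$ lower-triangular with unit diagonal (hence both invertible), I obtain $\pmb{L}_i(k)^{-1}=\tfrac{1}{\sigma}\,\pmb{B}_i(k)^{-1}(I_{k+1}-\pmb{A}_i(k))$. Rather than forming $\pmb{B}_i(k)^{-1}$ explicitly, I would recast the defining relation $\widehat{\pmb{z}}_i(k)=\pmb{L}_i(k)^{-1}\pmb{z}_i(k)$ as the triangular linear system
\begin{align}
\sigma\,\pmb{B}_i(k)\,\widehat{\pmb{z}}_i(k)=(I_{k+1}-\pmb{A}_i(k))\,\pmb{z}_i(k),
\end{align}
which can then be solved one row at a time by forward substitution.

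The second step is to read off component $m$ (for $0\le m\le k$) of this system using the explicit row patterns in~\eqref{eqn:definition_ARMA_covariance_updates}. By construction, row $m$ of $\pmb{A}_i(k)$ places the coefficient $a_i(j)$ in column $m-j$ for $j=1,\dots,p_i$ (and zeros elsewhere), while row $m$ of $\pmb{B}_i(k)$ has a unit diagonal together with $b_i(j)$ in column $m-j$ for $j=1,\dots,q_i$. Hence the right-hand side of component $m$ equals $z_i(m)-\sum_{j=1}^{p_i}a_i(j)\,z_i(m-j)$ and the left-hand side equals $\sigma\big(\widehat z_i(m)+\sum_{j=1}^{q_i}b_i(j)\,\widehat z_i(m-j)\big)$. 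The zero padding prescribed in~\eqref{eqn:definition_ARMA_covariance_updates} for $m\le p_i$ (resp.\ $m\le q_i$) automatically drops the terms with negative time index, which is exactly the convention $z_i(\cdot)=\widehat z_i(\cdot)=0$ for negative arguments. Solving the resulting scalar equation for $\widehat z_i(m)$ yields precisely~\eqref{eqn:theorem_convolution_conclusion}.

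Finally, I would note a consistency point that upgrades this per-coordinate identity to the stated filtering claim: because the recursion in~\eqref{eqn:definition_ARMA_covariance} builds $\pmb{A}_i(k)$ and $\pmb{B}_i(k)$ by nesting the previous matrices in the top-left block, the equation for $\widehat z_i(m)$ involves only rows $0,\dots,m$ and is therefore identical for every horizon $k\ge m$. Thus $\big(\widehat z_i(m)\big)_{m\ge 0}$ is a single, horizon-independent sequence, namely the output of the ARMA$(p_i,q_i)$ filter in~\eqref{eqn:theorem_convolution_conclusion} driven by $\pmb{z}_i$, and $\widehat{\pmb{z}}_i(k)$ collects its first $k+1$ samples. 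I expect no deep difficulty here; the only delicate point is the index bookkeeping at the boundary $m\le\max\{p_i,q_i\}$, where one must verify that the zero padding of $\pmb{\alpha}_i,\pmb{\beta}_i$ matches the zero-initial-condition convention. As an independent sanity check, the filter~\eqref{eqn:theorem_convolution_conclusion} is exactly $\tfrac{1}{\sigma}$ times the inverse of the noise-generating ARMA filter of~\eqref{eqn:noise_ARMA}; applied to $\pmb{n}_i(k)$ it returns the normalized innovations $\pmb{\epsilon}_i(k)$, consistent with $\pmb{L}_i(k)^{-1}$ acting as a whitening transform.
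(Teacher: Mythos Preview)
Your proposal is correct and follows essentially the same route as the paper: rewrite $\widehat{\pmb{z}}_i(k)=\pmb{L}_i(k)^{-1}\pmb{z}_i(k)$ as the triangular system $\pmb{B}_i(k)\,\widehat{\pmb{z}}_i(k)=\tfrac{1}{\sigma}(I_{k+1}-\pmb{A}_i(k))\,\pmb{z}_i(k)$ and then read off the componentwise recursion using the banded structure of~\eqref{eqn:definition_ARMA_covariance}--\eqref{eqn:definition_ARMA_covariance_updates}. Your version simply adds extra detail (the boundary bookkeeping, the horizon-independence remark, and the whitening sanity check) that the paper leaves implicit.
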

\begin{proof}
From 
\begin{align}
  \underbrace{\sigma\,( I_{k+1} - \pmb{A}_i(k) )^{-1} \pmb{B}_i(k)}_{\pmb{L}_i{(k)}} \widehat{\pmb{z}}_i(k) = \pmb{z}_i(k)
 	\end{align}
follows
\begin{align}
\label{eqn:matrix_equation_convolution_proof}
\pmb{B}_{i}{(k)}\,\widehat{\pmb{z}}_i(k) &= \frac{1}{\sigma} \big(I_{k+1}-\pmb{A}_{i}{(k)}\big)\,\pmb{z}_i{(k)}.
\end{align}
Plugging~\eqref{eqn:definition_ARMA_covariance} and~\eqref{eqn:definition_ARMA_covariance_updates} into~\eqref{eqn:matrix_equation_convolution_proof} gives~\eqref{eqn:theorem_convolution_conclusion}.\end{proof}
The filter in~\eqref{eqn:theorem_convolution_conclusion} is nothing else than the inverse of the noise filter~\eqref{eqn:noise_ARMA}; so, when applied to $\pmb{z}_i{(k)}=\pmb{n}_i(k)$, it outputs  the zero-mean white Gaussian signal $\widehat{\pmb{z}}_i{(k)}=\pmb{\epsilon}_i(k)$. So filter~\eqref{eqn:theorem_convolution_conclusion} essentially  whitens ${\pmb{n}}_i(k)$. 

To conclude, agent~$i$ implements the map $\mathcal{A}_{i,k}$ at each time step $k$, by obtaining the needed $\widehat{\theta}_i(k)$ and $\widehat{y}_i(k)$ as the outputs of its two local copies of the filter~\eqref{eqn:theorem_convolution_conclusion}: one copy is fed the signal of interest $ \theta_i (k) $; the other is fed the stream of local observations $ {y}_i(k) $.

To compute the filtered signals $\widehat{y}_i(k),\widehat{\theta}_i(k)$, agent $i$ stores the past $q_i$ samples of the filtered signals $\widehat{\pmb{{y}}}_i(k)$, $ \widehat{\pmb{{\theta}}}_i(k)$ and the past $p_i$ samples of the unfiltered signals $\pmb{y}_i(k)$, $ \pmb{\theta}_i(k)$, that is,
\begin{align}
&\{\widehat{y}_i(k),\dots,\widehat{y}_i(k-q_i)\},\, \{ \widehat{\theta}_i(k),\dots, \widehat{\theta}_i(k-q_i)\} \\
&\{{y}_i(k),\dots,{y}_i(k-p_i)\},\, \{ {\theta}_i(k),\dots, {\theta}_i(k-p_i)\},
\end{align}
with zero initial conditions for all signals, i.e., $\widehat{y}_i(k)=y_i(k)= \widehat{\theta}_i(k)= {\theta}_i(k)=0$ for any $k<0$. So the memory requirements of agent $i$ grow as $\mathcal{O}({p_i+q_i}+|\mathcal{N}(i)|)$ with $|\mathcal{N}(i)|$ the size of its neighbouring set in communication network $G$ (agent $i$ also needs to store the non-zero entries of the $i$-th row of matrix $W$).   By using~\eqref{eqn:theorem_convolution_conclusion} to compute $ {\theta}_i(k)$ and $\widehat{y}_i(k)$ one easily concludes that implementing~\eqref{eqn:LLR_test_distrbuted} also requires $\mathcal{O}({p_i+q_i}+|\mathcal{N}(i)|)$ operations (sum or products).

\section{Theoretical analysis of RCD}
\label{sec:theory_on_RCD}
In this section we analyze the performance of the local RCD detectors~\eqref{eqn:LLR_test_distrbuted}.
Specifically, for the local detector of agent~$i$, we study the long time behaviour ($k \rightarrow \infty$) of the two error probabilities, that is, the probabilities of false alarm and miss,
\begin{align}
\mathbb{P}_{F,i}(k)&:=\mathbb{P}_{H_0}( D_i(k) = 1),
\label{eqn:type_I_error}
\\
\mathbb{P}_{M,i}(k)&:=\mathbb{P}_{H_1}( D_i(k) = 0).
\label{eqn:type_II_error}
\end{align}
The asymptotic behaviour of these probabilities will be expressed in terms of a time-scaling sequence $( f(k) )_{k \geq 0}$, which depends on which regime (a) or (b) is activated: in regime (a), $f(k) $ is a linearly growing sequence with $f_a(k) = k+1$, while in regime (b) it is the constant sequence with $f_b(k) = 1$. Hereafter, we denote by ${\mathcal F} = \left\{ ( f_a(k) )_k, ( f_b(k) )_k \right\}$ the two previously mentioned time scalings.

Which regime is activated depends on  the growing rate of \begin{equation} \left( \left\| \widehat{\pmb{\theta}}(k) \right\|^2 \right), \label{eqn:energy} \end{equation} where
\begin{align}
    \widehat{\pmb{\theta}}(k) = \begin{bmatrix} \widehat{\pmb{\theta}}_1(k) \\ \vdots \\ \widehat{\pmb{\theta}}_N(k) \end{bmatrix} \in \mathbf{R}^{N(k+1)}, \label{eqn:concat} \end{align}
with $\widehat{\pmb{\theta}}_i(k)$ defined in~\eqref{eqn:deftheta}. Note that $$\left\| \widehat{\pmb{\theta}}(k) \right\|^2 = \sum_{m = 0}^k \left\| \widehat \theta(m) \right\|^2,$$
where $\widehat \theta(k)$ is as in~\eqref{eqn:tN};
thus, $\left\| \widehat{\pmb{\theta}}(k) \right\|^2$ can be interpreted as the cumulative energy of the whitened network signal $(  \widehat{\pmb{\theta}}(m) )$ from $m = 0$ till $m = k$.

We break up the analysis of the error probabilities in two steps, summarized as follows:
\begin{enumerate}
    \item Assuming, momentarily, that the sequence~\eqref{eqn:energy} grows as 
    \begin{equation} \left\| \widehat{\pmb{\theta}}(k) \right\|^2 = \alpha f(k) + o\left( f(k) \right), \label{eqn:assumpg} \end{equation}
    where $\alpha$ is a positive number and the sequence $\left( f(k) \right)$ is one of the time-scalings in $\mathcal{F}$, theorem~\ref{theorem:energy_theorem} shows that \begin{align}
-\frac{\log \mathbb{P}_{F,i}(k)}{f(k)} & \rightarrow  \beta_F \label{eqn:asympF} \\
-\frac{\log \mathbb{P}_{M,i}(k)}{f(k)} & \rightarrow  \beta_M, \label{eqn:asympM}
    \end{align}
    as $k \rightarrow \infty$, for some explicit constants $\beta_F,\beta_M>0$. Thus, in regime (a), the probabilities $\mathbb{P}_{F,i}(k)$ and $\mathbb{P}_{M,i}(k)$ decay exponentially fast to zero as in standard large deviations results~\cite{ellis2006entropy}, roughly behaving as $\mathbb{P}_{F,i}(k) \simeq \exp(-\beta_F (k+1))$ and $\mathbb{P}_{M,i}(k) \simeq \exp(-\beta_M (k+1))$. This exponential vanishing of the probabilities can be attributed to the linear increase of the energy of $\widehat{\pmb{\theta}}(k)$, a generic effect already noted in centralized setups~\cite{Poor_Vincent_estimation_book},~\cite{Levy_book},~\cite{van_trees_book_1}. 
    In sharp contrast, for the novel regime (b), the probabilities of error settle into a positive error floor, so $\mathbb{P}_{F,i}(k) \simeq \exp(-\beta_F)$ and $\mathbb{P}_{M,i}(k) \simeq \exp(-\beta_M)$ for large $k$;
    \item In Theorem~\ref{theorem:single_agent_energy}, we show that the assumption~\eqref{eqn:assumpg} indeed holds by giving explicitly both the positive constant $\alpha$ and the time scaling $\left( f(k) \right) \in {\mathcal F}$ that arise from the underlying ARMA filters.
\end{enumerate}

We now carry out the two steps of the analysis.

\subsection{Step 1: asymptotic performance of $\mathbb{P}_{F,i}(k)$ and $\mathbb{P}_{M,i}(k)$}

\label{sec:general_section}

The key result is Theorem~\ref{theorem:energy_theorem}, described next.

\begin{Theorem}(Asymptotic performance of the local detectors~\eqref{eqn:LLR_test_distrbuted})
	\label{theorem:energy_theorem} 
Let Assumptions~1 and~2 in Section~\ref{sec:assumptions} hold. 

If the growing rate condition~\eqref{eqn:assumpg} also holds, then the false alarm probability $\mathbb{P}_{F,i}(k)$ and the miss probability $\mathbb{P}_{M,i}(k)$ of the local detector~\eqref{eqn:LLR_test_distrbuted} at agent~$i$ obey~\eqref{eqn:asympF} and~\eqref{eqn:asympM} with
	\begin{align}	
 \beta_F & = \begin{cases}  
	\dfrac{\alpha}{8} &, \mbox{ if } f = f_a \\
 -\log\, \mathcal{Q} \left( \dfrac{\sqrt{\alpha}}{2} +\dfrac{ \gamma\,N}{\sqrt{\alpha}} \right) &,\mbox{ if }   f = f_b,  \end{cases} \label{eqn:energy_theorem_resulta} \\[5pt] 	\beta_M& = \begin{cases} 
\dfrac{\alpha}{8} &, \mbox{ if } f = f_a \\
-\log\, \mathcal{Q}\left( \dfrac{\sqrt{\alpha}}{2}-\dfrac{ \gamma\,N}{\sqrt{\alpha}} \right) &, \mbox{ if } f = f_b, \end{cases}
		\label{eqn:energy_theorem_resultb}
	\end{align}
 where the $\mathcal{Q}$ function computes the tail probability of a standard Gaussian,  $\mathcal{Q}(t) = \int_{t}^{+\infty} \exp(-\frac{x^2}{2}) dx / \sqrt{2\pi}$. 
\end{Theorem}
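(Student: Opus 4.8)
The plan is to reduce the theorem to a Gaussian tail computation. First I would unroll the consensus recursion~\eqref{eqn:newupdate} from the initial condition $l(-1)=0$ to obtain the closed form
\[
l(k) = \sum_{m=0}^{k} W^{k-m}\, g(m), \qquad g(m) := \widehat\Theta(m)\,\widehat y(m) - \tfrac12 \widehat\Theta(m)^2 \pmb{1}.
\]
Because Theorem~\ref{theorem:convolution_structure} shows that the whitening filter maps $\pmb{y}_i$ to an i.i.d.\ standard normal vector, under $H_0$ each $\widehat y_j(m)=\epsilon_j(m)$ and under $H_1$ each $\widehat y_j(m)=\widehat\theta_j(m)+\epsilon_j(m)$, where the $\{\epsilon_j(m)\}$ are mutually independent $\mathcal{N}(0,1)$ (independence in $m$ from the nested Cholesky structure, in $j$ from Assumption~1). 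Hence $l_i(k)$, an affine function of jointly Gaussian variables with deterministic $\widehat\theta_j(m)$, is itself Gaussian, so it suffices to track its mean and variance under each hypothesis and write $\mathbb{P}_{F,i}(k)=\mathcal{Q}\big((\gamma-\mu_0(k))/\sqrt{v(k)}\big)$ and $\mathbb{P}_{M,i}(k)=\mathcal{Q}\big((\mu_1(k)-\gamma)/\sqrt{v(k)}\big)$.

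Next I would compute these moments. Writing $w_{ij}^{(r)}=[W^r]_{ij}$ and using $\mathbb{E}[g_j(m)]=\mp\tfrac12\widehat\theta_j(m)^2$ under $H_0/H_1$ together with $\mathrm{Var}[g_j(m)]=\widehat\theta_j(m)^2$, independence yields
\[
\mu_{0/1}(k) = \mp\tfrac12 \sum_{m=0}^k \sum_{j} w_{ij}^{(k-m)}\,\widehat\theta_j(m)^2, \qquad v(k) = \sum_{m=0}^k \sum_{j} \big(w_{ij}^{(k-m)}\big)^2\,\widehat\theta_j(m)^2.
\]
The eigendecomposition~\eqref{eqn:mW} gives $W^r = J + S D^r S^T$ with $J=\tfrac1N\pmb{1}\pmb{1}^T$ and $|w_{ij}^{(r)}-\tfrac1N|\le\lambda^r$, $\lambda=\max_n|d_n|<1$. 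Replacing $w_{ij}^{(k-m)}$ by $\tfrac1N$ produces the leading terms $\mu_{0/1}(k)\approx\mp\tfrac{1}{2N}\|\widehat{\pmb{\theta}}(k)\|^2$ and $v(k)\approx\tfrac{1}{N^2}\|\widehat{\pmb{\theta}}(k)\|^2$, which by the growth hypothesis~\eqref{eqn:assumpg} equal $\mp\tfrac{\alpha}{2N}f(k)+o(f(k))$ and $\tfrac{\alpha}{N^2}f(k)+o(f(k))$.

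With these asymptotics the two regimes follow from a direct $\mathcal{Q}$-function analysis. In regime (a), $f=f_a$, the arguments of $\mathcal{Q}$ grow like $\tfrac{\sqrt\alpha}{2}\sqrt{k+1}$ (the threshold $\gamma$ being swamped), so the standard tail estimate $-\log\mathcal{Q}(t)\sim t^2/2$ gives $-\log\mathbb{P}_{F,i}(k),\,-\log\mathbb{P}_{M,i}(k)\sim\tfrac{\alpha}{8}(k+1)$, yielding~\eqref{eqn:energy_theorem_resulta}--\eqref{eqn:energy_theorem_resultb} with $\beta_F=\beta_M=\alpha/8$ and no $\gamma$-dependence. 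In regime (b), $f=f_b\equiv1$, the moments converge to the finite limits $\mp\alpha/(2N)$ and $\alpha/N^2$, so the arguments of $\mathcal{Q}$ converge to $\tfrac{\sqrt\alpha}{2}\pm\tfrac{\gamma N}{\sqrt\alpha}$ (plus for false alarm, minus for miss); dividing $-\log$ of the resulting probabilities by $f_b\equiv1$ returns exactly the $\gamma$-dependent error floors of~\eqref{eqn:energy_theorem_resulta}--\eqref{eqn:energy_theorem_resultb}.

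The main obstacle is justifying that the transient contribution of $S D^{k-m}S^T$ is $o(f(k))$ in both moments, which is most delicate in regime (b), where $f(k)\equiv1$ does not grow and the claimed limits must be exact. There the correction is bounded by $\sum_{m=0}^k\lambda^{k-m}\|\widehat\theta(m)\|^2$, a convolution of a geometric kernel with $\|\widehat\theta(m)\|^2$; since~\eqref{eqn:assumpg} with $f=f_b$ forces $\sum_m\|\widehat\theta(m)\|^2=\alpha<\infty$, splitting the sum at $m=k/2$ sends the early part to zero through $\lambda^{k/2}$ and the late part to zero through the convergent tail, so the correction vanishes. In regime (a) the same kernel acts on a sequence whose partial sums grow linearly, and a summation-by-parts estimate shows the correction is $o(k+1)$. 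These two estimates are the only real work beyond the routine Gaussian bookkeeping.
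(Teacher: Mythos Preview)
Your proposal is correct and follows essentially the same route as the paper: recognize that $l_i(k)$ is Gaussian, show that its mean and variance behave as $\mp\tfrac{\alpha}{2N}f(k)$ and $\tfrac{\alpha}{N^2}f(k)$ by separating the consensus part $J=\tfrac1N\pmb{1}\pmb{1}^T$ from the geometrically decaying transient $SD^{k-m}S^T$, and then feed these into the Mills-ratio asymptotics $-\log\mathcal{Q}(t)\sim t^2/2$. The only difference is packaging. The paper keeps the recursive form of~\eqref{eqn:newupdate} and proves a single abstract lemma (Lemma~\ref{lemma:aux1}) for recursions $v(k{+}1)=\mathcal{W}v(k)+u(k)$ under the growth condition~\eqref{eqn:ggr}; it then applies this lemma twice, once to the mean recursion and once to the \emph{vectorized} covariance recursion with $\mathcal{W}=W\otimes W$, which automatically delivers $\Omega_{ii}(k)/f(k)\to\alpha/N^2$. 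Your unrolled closed form and direct convolution bound $\sum_m\lambda^{k-m}\|\widehat\theta(m)\|^2$ accomplish exactly the same thing; the paper's recursive lemma just hides the split-at-$k/2$ and Ces\`aro-type arguments inside a $\limsup$ inequality. Either organization is fine, and the technical crux you identified---that the transient is $o(f(k))$, most delicate when $f\equiv1$---is precisely what the paper's Lemma~\ref{lemma:aux1} is built to handle.
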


Before we turn to the proof of the theorem it is worth of note that, as~\eqref{eqn:energy_theorem_resulta} and~\eqref{eqn:energy_theorem_resultb} show, the asymptotic performance of the error probabilities  $\mathbb{P}_{F,i}(k)$ and  $\mathbb{P}_{M,i}(k)$ of agent~$i$ turn out not to depend on the specific agent~$i$. So all agents implementing RDC will enjoy the same asymptotic performance, regardless of temporal correlation. 

Note also that Assumption~3, which attaches ARMA structures to both the detection signal and measurement noise, is not invoked here. This means that Theorem~\ref{theorem:energy_theorem} is actually stronger than it needed to be, for it  covers a class of temporally correlated signals and noise that extend well beyond the ARMA ones; this result might be of independent interest.


\label{sec:theory_Section}


\begin{proof}
We prove only the asymptotic behaviour of the probability of miss~\eqref{eqn:asympM} (proving result~\eqref{eqn:asympF} is similar). So, from now on, we assume that the measurements $ y(k) $ are being generated according to hypothesis $H_1$, see~\eqref{eqn:hypothesis_formulation}.

As the probability of miss in~\eqref{eqn:type_II_error}
depends on the distribution of the local decision $D_i(k)$, which in turn depends on the distribution of $l_i(k)$ (recall~\eqref{eqn:LLR_test_distrbuted}), we start by scrutinizing the distribution of vector $l(k)$, which evolves according to~\eqref{eqn:newupdate}.

Because $\widehat y(k)$ is a Gaussian random vector, it follows from~\eqref{eqn:newupdate} that $l(k)$ is likewise a Gaussian random vector: \begin{align} l(k) \sim {\mathcal N}(\mu(k), \Omega(k)), \label{eqn:vecl} \end{align}
the symbols $\mu(k)$ and $\Omega(k)$  denoting the mean and covariance of $l(k)$. 
In fact, from~\eqref{eqn:newupdate}, we can derive a recursion for the means and covariances, using the $H_1$ hypothesis:
\begin{align}
\mu(k+1) & =  W \mu(k) +  \frac{1}{2} \widehat\Theta(k)^2 \pmb{1} \label{eqn:r_mean} \\ 
\Omega(k+1)  & =  W \Omega(k) W^T + \widehat \Theta(k)^2,
\label{eqn:r_cov}
\end{align}
where, to obtain~\eqref{eqn:r_mean}, we used
$\mathbb{E} \left( \widehat y(k) \right) =  \widehat\Theta(k) \pmb{1}$ (since $H_1$ is being assumed), while, for~\eqref{eqn:r_cov}, 
 we used the fact that $\left( \widehat y(k) \right)$ is a (whitened) uncorrelated sequence of random vectors with the identity as their covariance matrix (recall~\eqref{eqn:Cholesky} and~\eqref{eqn:defy}).

Our next step consists in exploiting 
the recursions~\eqref{eqn:r_mean} and~\eqref{eqn:r_cov} to 
establish the asymptotic behaviour of $\mu(k)$ and $\Omega(k)$. For this, the next lemma is instrumental.
\begin{lemma}
	\label{lemma:aux1}
	Suppose the sequence of $n$-dimensional vectors $( v(k) )_{k \geq 0}$ obeys the recursion  \begin{align} v(k+1) = \mathcal{W} v(k) + u(k), \label{eqn:template} \end{align} for $k \geq 0$ (with $v(-1) = 0$), where $\mathcal{W}$ is a symmetric matrix with eigenvalue decomposition
 \begin{align} \label{eqn:gevd}
	 \mathcal{W} =\begin{bmatrix}
	\tfrac{1}{\sqrt{n}}\mathbf{1}_n &  \mathcal{S}
	\end{bmatrix}\begin{bmatrix}
	1 &  \\
	& \mathcal{D}
	\end{bmatrix} \begin{bmatrix}
	\tfrac{1}{\sqrt{n}}\mathbf{1}_n^T\\
	\mathcal{S}^T
	\end{bmatrix},
	\end{align}
	with ${\mathcal D} \in \mathbf{R}^{(n-1)\times (n-1)}$ being a diagonal matrix whose diagonal entries satisfy
	$| d_j | < 1$, for $1 \leq j \leq n-1$.

 If $(u(k))$ is a sequence of non-negative vectors ($u_j(k) \geq 0$ for $1 \leq j \leq n$) that grows as
 \begin{align} \sum_{m = 0}^k \pmb{1}_n^T u(m) = \theta f(k) + o( f(k) ), \label{eqn:ggr} \end{align}
 where $\theta > 0$ and $( f(k) )$ is one of the time-scalings in $\mathcal{F}$, then
 \begin{align} \frac{v(k)}{f(k)} \rightarrow \dfrac{\theta}{n} \pmb{1}_n, \label{eqn:lemmaconc} \end{align}
 as $k \rightarrow \infty$.
\end{lemma}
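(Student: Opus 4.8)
The plan is to split $v(k)$ along the eigenstructure of $\mathcal{W}$ displayed in~\eqref{eqn:gevd}, separating the \emph{consensus} direction $\mathbf{1}_n$ (eigenvalue $1$) from the orthogonal \emph{disagreement} subspace (eigenvalues $d_j$ with $|d_j|<1$). I would introduce the orthogonal projector $J = \tfrac{1}{n}\mathbf{1}_n\mathbf{1}_n^T$ and write $v(k) = Jv(k) + (I_n-J)v(k)$. The goal is then to show that the first summand, divided by $f(k)$, converges to $\tfrac{\theta}{n}\mathbf{1}_n$, while the second is $o(f(k))$; adding the two yields~\eqref{eqn:lemmaconc}.

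For the consensus term I would left-multiply~\eqref{eqn:template} by $\mathbf{1}_n^T$. Since the decomposition~\eqref{eqn:gevd} makes $\mathcal{W}$ symmetric with $\mathcal{W}\mathbf{1}_n = \mathbf{1}_n$ and $\mathbf{1}_n^T\mathcal{W} = \mathbf{1}_n^T$, the scalar $\mathbf{1}_n^T v(k)$ satisfies the trivial recursion $\mathbf{1}_n^T v(k+1) = \mathbf{1}_n^T v(k) + \mathbf{1}_n^T u(k)$, which telescopes from the zero initial condition to the cumulative sum $\mathbf{1}_n^T v(k) = \sum_{m} \mathbf{1}_n^T u(m)$. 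The growth hypothesis~\eqref{eqn:ggr} then gives $\mathbf{1}_n^T v(k) = \theta f(k) + o(f(k))$, the harmless one-step index shift being absorbed because $f(k-1)/f(k)\to 1$ for both $f_a$ and $f_b$. Hence $Jv(k)/f(k) = \tfrac{1}{n}\big(\mathbf{1}_n^T v(k)\big)\mathbf{1}_n/f(k) \to \tfrac{\theta}{n}\mathbf{1}_n$.

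For the disagreement term $w(k) := (I_n-J)v(k)$ I would project~\eqref{eqn:template}. Using $\mathcal{W}J = J\mathcal{W} = J$ one checks that $w(k)$ obeys $w(k+1) = \widetilde{\mathcal{W}}\, w(k) + (I_n-J)u(k)$ with $\widetilde{\mathcal{W}} := (I_n-J)\mathcal{W}(I_n-J) = \mathcal{S}\mathcal{D}\mathcal{S}^T$, a symmetric matrix of spectral radius $d := \max_j|d_j| < 1$. Unrolling this recursion and using $\|\widetilde{\mathcal{W}}^{\,t}\| = d^{\,t}$, $\|I_n-J\| = 1$, and---crucially---the nonnegativity of $u$, which gives $\|u(m)\| \le \mathbf{1}_n^T u(m) =: s(m)$, I would obtain the scalar bound $\|w(k)\| \le T(k) := \sum_{m} d^{\,k-1-m} s(m)$. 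It then remains to prove $T(k) = o(f(k))$.

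This last step is the main obstacle, because in regime~(a) the individual $s(m)$ need not be bounded---only their partial sums $S(k) = \sum_{m\le k} s(m) = \theta f_a(k) + o(f_a(k))$ grow linearly. Regime~(b) is easy: $\sum_m s(m)$ converges, so $s(m)\to 0$, and the convolution of a summable geometric kernel with a null sequence tends to $0$, giving $T(k)\to 0 = o(1)$. For regime~(a) I would split $T(k)$ at a fixed lag $L$: the recent block $\sum_{m=k-L}^{k-1} d^{\,k-1-m}s(m) \le S(k-1)-S(k-1-L) = \theta L + o(k)$ is $o(f_a(k))$ for each fixed $L$, whereas the old block carries the uniform factor $d^{\,k-1-m}\le d^{L}$, hence is at most $d^{L}S(k-1) = O(d^{L}k)$. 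Dividing by $f_a(k)=k+1$ and letting first $k\to\infty$ and then $L\to\infty$ forces $\limsup_k T(k)/f_a(k) \le C\,d^{L}\to 0$. Combining the consensus and disagreement estimates then establishes~\eqref{eqn:lemmaconc}.
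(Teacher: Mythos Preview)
Your proof is correct and follows the same overall architecture as the paper: split $v(k)$ into its consensus component along $\mathbf{1}_n$ and its disagreement component in the orthogonal subspace, telescope the former using~\eqref{eqn:ggr}, and show the latter is $o(f(k))$. The consensus part is essentially identical to the paper's.

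The only noteworthy difference is in how the disagreement bound is obtained. The paper works componentwise in the $\mathcal{S}$-basis, derives the scalar recursion $|\widetilde v_j(k)/f(k)| \le |d_j|\,|\widetilde v_j(k-1)/f(k-1)| + \|s_j\|_\infty\,\mathbf{1}_n^T u(k)/f(k)$, proves the forcing term $\mathbf{1}_n^T u(k)/f(k)\to 0$ separately for each regime, and then invokes a one-line $\limsup$ stability lemma for contractions. You instead unroll the recursion to a geometric convolution $T(k)=\sum_m d^{\,k-1-m}s(m)$ and, for regime~(a), use a lag-$L$ truncation: the recent block is $S(k-1)-S(k-1-L)=\theta L+o(k)$ and the old block is at most $d^L S(k-1)$, so $\limsup_k T(k)/(k+1)\le \theta\,d^L$ for every $L$. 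Both devices are standard and equivalent in strength; the paper's version is slightly more streamlined because it avoids the explicit convolution and the double limit, while your version has the advantage of never needing the intermediate fact $s(k)/f(k)\to 0$.
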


Lemma~\ref{lemma:aux1} is proved in the Appendix. We now use it twice:
\begin{itemize}
\item First, note that~\eqref{eqn:r_mean} fits the template~\eqref{eqn:template} when we identify $n = N$, $v(k) = \mu(k)$, $\mathcal{W} = W$, and $u(k) = \widehat \Theta(k)^2 \pmb{1} / 2$; indeed, the eigenvalue decomposition~\eqref{eqn:mW} verifies~\eqref{eqn:gevd}, and~\eqref{eqn:assumpg} verifies~\eqref{eqn:ggr} with $\theta = \alpha/2$. Thus,
\begin{align}
	\frac{\mu(k)}{{f(k)}} &\rightarrow \frac{\alpha}{2 N}\,\mathbf{1}_N.
	\label{eqn:limit_mean_proof}
 \end{align}
 \item Second, note that~\eqref{eqn:r_cov}, too,  fits the template~\eqref{eqn:template} once we switch to the vectorized form of~\eqref{eqn:r_cov}:
 \begin{align}
\text{vec}\left( \Omega(k+1) \right) = \left( W \otimes W \right) \text{vec}\left( \Omega(k) \right) + \text{vec}\left( \widehat{\Theta}^2(k) \right), \label{eqn:vecf}
 \end{align}
where $\otimes$ denotes Kronecker product and the generic property $\text{vec}( A X B ) = ( B^T \otimes A ) \text{vec}( X )$ was used to go from~\eqref{eqn:r_cov} to~\eqref{eqn:vecf}. Recursion~\eqref{eqn:vecf} fits the template~\eqref{eqn:template} with the obvious identifications. Condition~\eqref{eqn:gevd} is verified because~\eqref{eqn:mW} implies that $\mathcal{W} = W \otimes W$  has the eigenvalue decomposition $$\mathcal{W} = \underbrace{( Q_W \otimes Q_W)}_{Q_\mathcal{W}}  \underbrace{( \Lambda_W \otimes \Lambda_W)}_{\Lambda_\mathcal{W}} \underbrace{( Q_W \otimes Q_W)^T}_{Q_\mathcal{W}^T},$$ a decomposition that satisfies the eigenvalue and eigenvectors requirements in~\eqref{eqn:gevd}. Furthermore, condition~\eqref{eqn:ggr} is verified with $\theta = \alpha$, thanks to~\eqref{eqn:assumpg}.
Thus,
\begin{align}
 \frac{\text{vec}\left( \Omega(k) \right)}{f(k)} & \rightarrow \frac{\alpha}{N^2}  \pmb{1}_{N^2}.
 \label{eqn:limit_variance_proof}
	\end{align}
\end{itemize}

We are now in position to derive~\eqref{eqn:asympM}, that is, the asymptotic behaviour of \begin{align}-\frac{\log \mathbb{P}_{M,i}(k)}{f(k)}. \label{eqn:ratio} \end{align}
Given that $\mathbb{P}_{M,i}(k) = \mathbb{P}_{H_1}( l_i(k) < \gamma )$ and $l_i(k) \sim {\mathcal N}\left( \mu_i(k), \Omega_{ii}(k) \right)$, where $\mu_i(k)$ is the $i$th component of the vector $\mu(k)$ and $\Omega_{ii}(k)$ the $i$th diagonal entry of the matrix $\Omega(k)$, we have
\begin{align} -\frac{\log \mathbb{P}_{M,i}(k)}{f(k)} 
= - \frac{\log\, \mathcal{Q}\left( \frac{-\gamma + \mu_{i}(k)}{\sqrt{\Omega_{ii}(k)}} \right)}{f(k)}.\label{eqn:ratio2} \end{align}
We study the limit of~\eqref{eqn:ratio2} as $k \rightarrow \infty$ for the two possible regimes:
\begin{itemize}
    \item In regime (a), we have $f(k) = k+1$, meaning that~\eqref{eqn:ratio2} becomes
    \begin{align}
-\frac{\log \mathbb{P}_{M,i}(k)}{f(k)} 
=       - \frac{\log\, \mathcal{Q}\left( \frac{-\gamma + \mu_{i}(k)}{\sqrt{\Omega_{ii}(k)}} \right)}{k+1}. \label{eqn:ad}
    \end{align}
    Writing $$\frac{-\gamma + \mu_{i}(k)}{\sqrt{\Omega_{ii}(k)}} = \underbrace{\frac{-\gamma/(k+1) + \mu_{i}(k)/(k+1)}{\sqrt{\Omega_{ii}(k)/(k+1)}}}_{u(k)} \underbrace{\sqrt{k+1}}_{\sqrt{v(k)}},$$
    we have $u(k) \rightarrow \sqrt{\alpha}/2$ (using~\eqref{eqn:limit_mean_proof} and~\eqref{eqn:limit_variance_proof}) and $v(k) \rightarrow \infty$. Thus, we can invoke the following auxiliary Lemma~\ref{lemma:aux2}, whose proof is in the Appendix.
    \begin{lemma}
        \label{lemma:aux2} Suppose $( u(k) )$ and $(v (k) )$ are sequences with positive terms. If $u(k) \rightarrow \theta > 0$ and $v(k) \rightarrow \infty$, then
    \begin{align} - \frac{\log \mathcal{Q}\left( u(k) \sqrt{ v(k)} \right)}{v(k)}  \rightarrow \frac{\theta^2}{2}. \label{eqn:convQ} \end{align}
    \end{lemma}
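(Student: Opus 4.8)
The plan is to reduce everything to the classical tail asymptotics of the Gaussian $\mathcal{Q}$-function. The whole statement hinges on the elementary fact that $\log \mathcal{Q}(t) = -t^2/2 + O(\log t)$ as $t \to \infty$; once the argument $u(k)\sqrt{v(k)}$ is shown to diverge, dividing by $v(k)$ isolates the leading quadratic term $u(k)^2/2 \to \theta^2/2$ and annihilates every lower-order contribution.

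First I would record the standard Mills-ratio bounds: writing $\phi(t) = e^{-t^2/2}/\sqrt{2\pi}$, one has, for every $t > 0$,
\[
\frac{t}{1+t^2}\,\phi(t) \;\le\; \mathcal{Q}(t) \;\le\; \frac{1}{t}\,\phi(t),
\]
which follow from the usual monotonicity / integration-by-parts argument applied to $\int_t^{\infty} e^{-x^2/2}\,dx$. Taking logarithms, both sides share the skeleton $-t^2/2 - \log t - \tfrac{1}{2}\log(2\pi)$ and differ only by the quantity $\log\frac{t^2}{1+t^2}$, so that
\[
-\frac{t^2}{2} - \log t - \tfrac{1}{2}\log(2\pi) + \log\frac{t^2}{1+t^2} \;\le\; \log\mathcal{Q}(t) \;\le\; -\frac{t^2}{2} - \log t - \tfrac{1}{2}\log(2\pi).
\]
Since $\log\frac{t^2}{1+t^2} \to 0$ as $t \to \infty$, both bounds collapse to
\[
\log\mathcal{Q}(t) = -\frac{t^2}{2} + O(\log t), \qquad t \to \infty.
\]

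Next I would substitute $t = t(k) := u(k)\sqrt{v(k)}$. Because $u(k) \to \theta > 0$ and $v(k) \to \infty$, the product $t(k) \to \infty$, so the expansion above is in force and
\[
-\frac{\log\mathcal{Q}(t(k))}{v(k)} = \frac{u(k)^2}{2} + \frac{O(\log t(k))}{v(k)}.
\]
I would then dispatch the two pieces separately. The main term satisfies $u(k)^2/2 \to \theta^2/2$. For the remainder, I split $\log t(k) = \log u(k) + \tfrac{1}{2}\log v(k)$; the term $\log u(k)$ stays bounded because $u(k) \to \theta > 0$, while $\log v(k)/v(k) \to 0$ since $v(k) \to \infty$. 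Hence $O(\log t(k))/v(k) \to 0$, and combining the two limits yields~\eqref{eqn:convQ}.

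I do not anticipate a genuine obstacle here; the single point deserving care is confirming that the argument of $\mathcal{Q}$ actually diverges, so that the large-$t$ tail estimate is legitimately applicable. This is immediate from $\theta > 0$ together with $v(k) \to \infty$, and everything else is the routine bookkeeping of inserting the bound and letting $k \to \infty$.
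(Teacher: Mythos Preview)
Your proof is correct and follows essentially the same approach as the paper: both invoke the Mills-ratio bounds $\frac{t}{1+t^2}\phi(t) \le \mathcal{Q}(t) \le \frac{1}{t}\phi(t)$, take logarithms, substitute $t = u(k)\sqrt{v(k)}$, and show that the non-quadratic residual divided by $v(k)$ vanishes. The only cosmetic difference is that you compress the residual into an $O(\log t)$ symbol, whereas the paper keeps the explicit sandwich $\frac{t^2}{2}+\alpha(t) \le -\log\mathcal{Q}(t) \le \frac{t^2}{2}+\beta(t)$ and checks each side separately.
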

    Applying Lemma~\ref{lemma:aux2} in~\eqref{eqn:ad} yields~\eqref{eqn:asympM} with $\beta_M = \alpha / 8$;

\item In regime (b), we have $f(k) = 1$. Therefore, from~\eqref{eqn:limit_mean_proof} and \eqref{eqn:limit_variance_proof}, together with the continuity of the function $\mathcal{Q}$,  the limit~\eqref{eqn:asympM} follows with $\beta_M = -\log \mathcal{Q}\left( \sqrt{\alpha}/2 - \gamma N / \sqrt{\alpha} \right).$ 
\end{itemize}

\end{proof}
\textbf{On threshold $\gamma$.} As~\eqref{eqn:energy_theorem_resulta} and~\eqref{eqn:energy_theorem_resultb} show,
 the threshold $\gamma$ of the local detectors~\eqref{eqn:LLR_test_distrbuted} does not affect the asymptotic behaviour of the probabilities of error for regime (a). This is consistent with previous results in distributed detection~\cite{Running_consensus_w_random_networks_2011} where $\gamma$ is usually set to zero. In contrast, for regime (b) the threshold $\gamma$ has a noticeable impact. It becomes a design parameter that can be exploited to navigate the trade-off between the asymptotic probability of false alarm $\mathbb{P}_{F}(\gamma) := \lim\limits_k \mathbb{P}_{F,i}(k)$ and the asymptotic probability of miss $\mathbb{P}_{M}(\gamma):=\lim\limits_k\mathbb{P}_{M,i}(k)$.
 Indeed, a higher value of $\gamma \geq 0$  bias the detector towards choosing hypothesis $H_0$, thus decreasing the probability of false alarm and increasing the probability of miss: 
	\begin{align}
	\lim_{\gamma\rightarrow +\infty} \mathbb{P}_{F,i}(\gamma)= 0,\enspace\lim_{\gamma\rightarrow +\infty} \mathbb{P}_{M,i}(\gamma)= 1.
	\end{align}
  A symmetric argument holds when $\gamma < 0$.

  ~\\[5pt]
\textbf{A special case of Theorem~\ref{theorem:energy_theorem}.} In the special case of $N = 1$ agent detecting a constant signal $\theta(k) = A$ in white Gaussian noise $n(k) \sim \mathcal{N}\left( 0 , \sigma^2 \right)$, regime (a) is activated because it can be checked that~\eqref{eqn:assumpg} holds with $\alpha = A^2 / \sigma^2$ and $f(k) = k+1$.
  Theorem~\ref{theorem:energy_theorem} therefore asserts  the exponential decay of the probabilities of error with the common exponent $\beta_F = \beta_M = \beta$, where $\beta = A^2/(8\sigma^2)$. Such $\beta$ is in fact the
   Chernoff distance between the distributions of the mesurements $y(k)$ under $H_1$ and $H_0$~\cite{Running_consensus_w_random_networks_2011},~\cite{Large_Deviation_book}. 

\subsection{Step 2: growth of energy $||\widehat{\pmb{\theta}}(k)||^2$ for ARMA models}
\label{sec:perfomance_arma_models}
In this section we derive the growing rate of the energy of $\widehat{\pmb{\theta}}(k)$, defined in~\eqref{eqn:concat}. Specifically, we show that \eqref{eqn:assumpg} holds by getting both $\alpha$ and the time-scaling $( f(k) )$ from the underlying ARMA models.

We start with the observation that~\eqref{eqn:concat} implies that the energy decouples across agents, that is,
\begin{align} \left\| \widehat{\pmb{\theta}}(k) \right\|^2 = \sum_{i = 1}^N \left\| \widehat{\pmb{\theta}}_i(k) \right\|^2, \label{eqn:edecouple} \end{align}
which suggests looking at the growing rate of the energy of the whitened signal at each agent. Thus, hereafter, we focus on a generic agent~$i$ and study the growing rate of $$\left\| \widehat{\pmb{\theta}}_i(k) \right\|^2 = \sum_{m = 0}^k \widehat\theta_i(m)^2.$$

Recall that $ \widehat\theta_i(k) $ is the output of the ARMA filter~\eqref{eqn:theorem_convolution_conclusion} with input $ \theta_i(k) $ and $ \theta_i(k) $ is the impulse response of the ARMA filter~\eqref{eqn:signal_ARMA}. Thus, the $\mathcal{Z}$-transform of $ \widehat \theta_i(k) $ is given by 
\begin{align}
\label{eqn:equations_freq_domain_signal}
 \widehat{\Theta}_i(z^{-1})&=\dfrac{A}{\sigma}\frac{1-\sum\limits_{j=1}^{q_i} a_{i}(j)\,z^{-j}}{1+\sum\limits_{j=1}^{p_i} b_{i}(j)\,z^{-j}}\, \frac{1+\sum\limits_{j=1}^{\bar{q}_i} \bar{b}_{i}(j)\,z^{-j}}{1-\sum\limits_{j=1}^{\bar{p}_i} \bar{a}_{i}(j)\,z^{-j}}.
\nonumber \\
\end{align}
This is just the expression in~\eqref{eqn:equations_freq_domain_signal_assumptions} to which Assumption 3 in Section~\ref{sec:assumptions} applies. Therefore,
from the partial fraction expansion result~\eqref{eqn:partial_fraction_decomposition}, we have
	\begin{align}
	\label{eqn:closed_form_proof}
	 \widehat{\theta}_i(k) = 
	\sum_{j=1}^{P_i} r_{i,j}\,p_{i,j}^k,
	\end{align} 
 where $r_{i,j}$ denotes the residual associated with pole~$i$, that is,
$$r_{i,j } = (1-p_{i,j}\,z^{-1})\, \widehat{\Theta}_i(z^{-1}) \Big|_{z=p_{i,j}.}$$

We  now  evaluate the energy of $\widehat{\pmb{\theta}}_i(k)$, which depends critically on $\rho_i$, the maximum absolute value of the poles (recall equation~\eqref{eqn:assum_poles}).

\begin{Theorem}(Asymptotic energy of  signal $\widehat{\pmb{\theta}}_i(k) $) 
\label{theorem:single_agent_energy} Under Assumption~3 we have
\begin{align}	
\left\| \widehat{\pmb{\theta}}_i(k) \right\|^2 & = \begin{cases}  
	| r_{i,1} |^2 (k+1) + o(k+1) &, \mbox{ if } \rho_i = 1  \\
 \sum\limits_{j, l = 1}^{P_i} \dfrac{r_{i,j} r_{i,l}^\star}{1 - p_{i,j} p_{i,l}^\star} + o( 1 ) &,\mbox{ if }   \rho_i < 1,  \end{cases} \\ \label{eqn:rho1}
		\end{align}
  where $z^\star$ stands for the complex-conjugate of a complex $z$. 
\end{Theorem}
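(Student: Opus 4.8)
The plan is to start from the closed-form expression~\eqref{eqn:closed_form_proof}, namely $\widehat{\theta}_i(k)=\sum_{j=1}^{P_i} r_{i,j}\,p_{i,j}^{k}$, and to evaluate the cumulative energy by direct expansion. Since $\widehat{\theta}_i(m)$ is real (it is the impulse response of a real-coefficient filter), I would write $\widehat{\theta}_i(m)^2=\widehat{\theta}_i(m)\,\overline{\widehat{\theta}_i(m)}$ and multiply the two partial-fraction sums to obtain
\begin{align}
\left\| \widehat{\pmb{\theta}}_i(k) \right\|^2
=\sum_{m=0}^{k}\widehat{\theta}_i(m)^2
=\sum_{j,l=1}^{P_i} r_{i,j}\,r_{i,l}^{\star}\sum_{m=0}^{k}\left(p_{i,j}\,p_{i,l}^{\star}\right)^{m}.
\end{align}
Writing $w_{j,l}:=p_{i,j}\,p_{i,l}^{\star}$, the inner sum is a finite geometric series, so the whole argument reduces to deciding, for each pair $(j,l)$, whether $w_{j,l}=1$ (which produces a linear term $k+1$) or $|w_{j,l}|<1$ (which produces a bounded, convergent term). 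I would then treat the two regimes of $\rho_i$ separately.

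For $\rho_i<1$ every pole lies strictly inside the unit disk, so $|w_{j,l}|=|p_{i,j}|\,|p_{i,l}|<1$ for all pairs; hence $w_{j,l}\neq 1$ and $\sum_{m=0}^{k} w_{j,l}^{m}=(1-w_{j,l}^{k+1})/(1-w_{j,l})\to 1/(1-w_{j,l})$, with remainder $O(\rho_i^{2(k+1)})$. Summing over the finitely many pairs yields the claimed limit $\sum_{j,l} r_{i,j}r_{i,l}^{\star}/(1-p_{i,j}p_{i,l}^{\star})$ with an $o(1)$ error. For $\rho_i=1$, the $(1,1)$ term must be singled out: its ratio is $w_{1,1}=p_{i,1}\,p_{i,1}^{\star}=|p_{i,1}|^{2}=1$, so $\sum_{m=0}^{k} w_{1,1}^{m}=k+1$ and this term contributes $|r_{i,1}|^{2}(k+1)$. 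Every other pair satisfies $|w_{j,l}|<1$: indeed, by the strict ordering~\eqref{eqn:assum_poles} one has $|p_{i,l}|<\rho_i=1$ whenever $l\neq 1$, so $|w_{j,l}|=|p_{i,j}|\,|p_{i,l}|<1$ unless $j=l=1$. These off-diagonal terms are therefore uniformly bounded geometric sums, contributing $O(1)=o(k+1)$, which gives the stated expression.

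The step I expect to require the most care is justifying, in the regime $\rho_i=1$, that $w_{1,1}=1$ is the \emph{unique} ratio equal to one; this rests on the dominant pole $p_{i,1}$ being real. I would argue this from Assumption~3: the maximal modulus $\rho_i=1$ is attained by a single pole, while the filter's real coefficients force complex poles to occur in conjugate pairs of equal modulus, so a complex $p_{i,1}$ would produce a second pole $p_{i,1}^{\star}\neq p_{i,1}$ of modulus one, contradicting uniqueness. Hence $p_{i,1}\in\{-1,+1\}$ and $w_{1,1}=1$. The remaining work—checking that no other $w_{j,l}$ equals one and that the off-diagonal remainders are genuinely $O(1)$—is routine once the strict inequality $|p_{i,l}|<1$ for $l\neq 1$ is invoked.
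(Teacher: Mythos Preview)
Your proof is correct and follows essentially the same route as the paper: expand the energy as the double sum $\sum_{j,l} r_{i,j}r_{i,l}^{\star}\sum_{m=0}^{k}(p_{i,j}p_{i,l}^{\star})^{m}$, then for $\rho_i<1$ sum all geometric series, and for $\rho_i=1$ isolate the $(1,1)$ term (which gives $|r_{i,1}|^{2}(k+1)$) and bound the remaining terms using $|p_{i,l}|<1$ for $l\neq 1$. Your closing discussion about $p_{i,1}$ being real is a correct observation but is not actually needed: $w_{1,1}=p_{i,1}p_{i,1}^{\star}=|p_{i,1}|^{2}=1$ holds regardless, and the uniqueness of the ratio equal to one follows directly from $|w_{j,l}|\leq |p_{i,1}|\cdot|p_{i,2}|<1$ for $(j,l)\neq(1,1)$, which is precisely the strict inequality in Assumption~3.
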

\begin{proof}
  Starting with the case $\rho_i = 1$, we have \begin{align} 1 = | p_{i,1} | > r  \geq | p_{i,2} | \geq \cdots \geq | p_{i,P_i} |, \label{eqn:case1} \end{align} for some $1 > r \geq 0$. Introducing the notation $$\Delta(k) = \sum\limits_{j = 2}^{P_i} r_{i,j} p_{i,j}^k$$ we rewrite  ~\eqref{eqn:closed_form_proof} as
  $\widehat\theta_i(k) = r_{i,1} p_{i,1}^k + \Delta(k)$. This yields
     \begin{align} 
     \left\| \widehat{\pmb{\theta}}_i(k) \right\|^2  = \sum_{m = 0}^k | \widehat\theta_i(m) |^2  &= \sum_{m = 0}^k \left| r_{i,1} p_{i,1}^m + \Delta(m) \right|^2 \\ & = | r_{i,1} |^2 (k+1) + u(k), \label{eqn:decomp}
     \end{align}
     where $$u(k) = \sum\limits_{m = 0}^k r_{i,1} p_{i,1}^m \Delta(m)^\star + r_{i,1}^\star (p_{i,1}^\star)^m \Delta(m) + | \Delta(m) |^2.$$
     
Now, note that $ u(k) $ is of  the form
     $$u(k) = \sum_{l = 1}^L \left( \sum_{m = 0}^k v_l w_l^m \right),$$ (for some $L$ and constants $v_l$) with each $w_l$ being the product (up to conjugates) of two poles not both equal to $p_{i,1}$; in concrete $w_l\in\{p_{i,j_1}\, p_{i,j_2},\,p_{i,j_1}^*\,p_{i,j_2}, \, p_{i,j_1}\,p_{i,j_2}^*,\, p_{i,j_1}^*\,p_{i,j_1}^*\}$ for some $(j_1,j_2)\neq (1,1)$. Thus, from~\eqref{eqn:case1} we have $| w_l | \leq r < 1$, which implies \begin{align}
| u(k) | & \leq  \sum_{l = 1}^L \left( \sum_{m = 0}^k | v_l | r^m \right) \leq \sum_{l = 1}^L | v_l | \frac{1}{1 - r}.
         \end{align}
         The sequence $( u(k) )$ is therefore bounded, and, as a consequence, satisfies $u(k) = o(k+1)$. This observation, together with~\eqref{eqn:decomp}, yields the first case in~\eqref{eqn:rho1}.

  We now turn to the case $\rho_i < 1$. We have \begin{align}
\left\| \widehat{\pmb{\theta}}_i(k) \right\|^2 = \sum_{m = 0}^k | \widehat\theta_i(m) |^2 = \sum_{j, l = 1}^{P_i} \left( \sum_{m = 0}^k r_{i,j} r_{i,l}^\star ( p_{i,j}  p_{i,l}^\star)^m \right).
    \end{align}

 Because $| p_{i,j} p_{i, l}^\star | \leq \rho_i^2 < 1$, it follows that 
 $$\sum_{m = 0}^k r_{i,j} r_{i,l}^\star ( p_{i,j}  p_{i,l}^\star)^m \rightarrow  \dfrac{r_{i,j} r_{i,l}^\star}{1 - p_{i,j} p_{i,l}^\star}.$$
 Therefore, $$\sum_{m = 0}^k | \widehat\theta_i(m) |^2 \rightarrow \sum_{j, l = 1}^{P_i} \dfrac{r_{i,j} r_{i,l}^\star}{1 - p_{i,j} p_{i,l}^\star},$$ which corresponds to the second case in ~\eqref{eqn:rho1}.
\end{proof}

Using~\eqref{eqn:edecouple} and Theorem~\ref{theorem:single_agent_energy}, we conclude that~\eqref{eqn:assumpg} holds. Indeed, let \begin{align}\rho = \max\{ \rho_i \colon 1 \leq i \leq N \}~\label{eqn:defrho} \end{align}
denote the maximum absolute value of the poles~\eqref{eqn:equations_freq_domain_signal} across agents. Two regimes appear:
\begin{itemize}
\item if $\rho = 1$ (thus, $\rho_i = 1$ for at least one agent~$i$), then regime (a) is activated.
In this case, \begin{align} \alpha = \sum_{i \in {\mathcal I}} | r_{i,1} |^2, \label{eqn:inf} \end{align}
where ${\mathcal I} = \{ i \in \{ 1, \ldots, N \} \colon \rho_i = 1 \}$, 
and $f(k) = k+1$;
\item if $\rho < 1$ (thus, $\rho_i < 1$ for all $i$), then regime (b) is activated with 
\begin{align} \alpha = \sum_{i \in \mathcal{I}} \sum_{j, l = 1}^{P_i} \dfrac{r_{i,j} r_{i,l}^\star}{1 - p_{i,j} p_{i,l}^\star} \label{eqn:regBI} \end{align}
where $\mathcal{I} = \{ 1, 2, \ldots, N \}$, and $f(k) = 1$.
\end{itemize}

\subsection{Weakly informative agents}
\label{sec:weak_informative}
As~\eqref{eqn:inf} shows, in regime (a) only the agents in the subset ${\mathcal I}$ contribute to the asymptotic exponential decay rate of the probability of errors enjoyed by all agents. So, the subset ${\mathcal I}$ can be interpreted as the subset of \emph{strongly informative} agents, whereas agents not in this set (if any) may be viewed as \emph{weakly informative} agents---their measurements cannot help improve the exponential decay rate.
This phenomenon, arising from the ARMA setup, means that we may relieve the weakly informative agents from getting measurements, thus saving possibly expensive sensors, while keeping the original asymptotic detection performance.

Note that in regime (b) all agents are strongly informative because they all contribute to the probability error floor, as shown in~\eqref{eqn:regBI}.

\section{Numerical Experiments: DC level embedded in colored noise}
\label{sec:numerical_section}
To validate the theoretical results, 
we consider a detection problem with $N = 40$ agents linked through a communication network (sampled from a Erdós-Renyi model) as shown in Figure~\ref{fig:graph_layout}. 
\begin{figure}[h!]
	\centering
	\includegraphics[width=5cm]{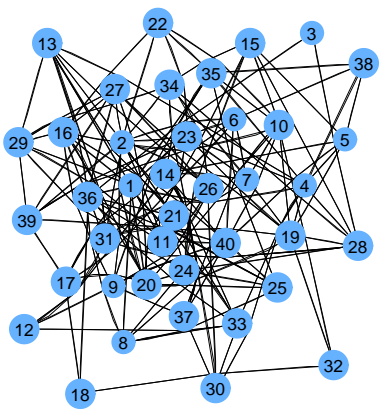}
	\caption{The communication network for section~\ref{sec:numerical_section} is a sample of a Erdós-Renyi graph with $N = 40$ nodes and probability of connection  $2\,\log(N)/N$; this network sample has $154$ edges.}
	\label{fig:graph_layout}
\end{figure}
The matrix $W$ in~\eqref{eqn:running_consensus} is chosen as the distributed averaging matrix~\cite{xiao2004fast}
\begin{align}
W=I_N- \frac{1}{\lambda_2 + \lambda_N} \mathcal{L},
\end{align}
where $\mathcal{L}$ is the Laplacian matrix of the graph in Figure~\ref{fig:graph_layout} and 
$0 = \lambda_1 < \lambda_2 \leq \dots \leq \lambda_N$
denote the  sorted eigenvalues of the Laplacian $\mathcal{L}$. All agents implement detector~\eqref{eqn:LLR_test_distrbuted} with $\gamma = 0$. 

We consider a detection setup in which  agents want to detect a common DC level signal masked in colored noise, that is,
\begin{align}
H_1:\enspace y_i(k)&= A+n_i(k)\\
H_0:\enspace y_i(k)&=n_i(k),
\label{eqn:DC_level}
\end{align}
with $A>0$ the amplitude of the DC level. Although our framework is rich enough to also encompass ARMA signals, we consider a DC level such that the derived rates are simple and interpretable. The noise sequence $( n_i(k) )$ follows an ARMA model 
with
\begin{align}
n_i(k)&=n_i(k-1)+\sigma\, \epsilon_i(k)+\sigma\,b_i(1)\, \epsilon_i(k-1), 
\label{eqn:DC_level_noise}
\end{align}
 which means that~\eqref{eqn:equations_freq_domain_signal} becomes
\begin{align}
 \widehat{\Theta}_i(z^{-1})
&=\frac{A/\sigma}{1+b_i(1)\,z^{-1}}.
\label{eqn:noise_model_Z_domain}
\end{align}
The coefficients $b_i(1)$ of the ARMA filters vary from agent to agent and are chosen so as to trigger regime (a) and regime (b), as detailed in the following sections.


\subsection{Regime (a): error probabilities decay exponentially fast}
\label{sec:reg_a}

We fixed $A = 1$ and $\sigma = 10$ and started by generating the coefficients $b_i(1)$ 
 uniformly at random from the interval $(-1,1)$: this makes the absolute value of the pole of~\eqref{eqn:noise_model_Z_domain}
 strictly less than one for all $i$. Then, to activate regime (a), we modified the coefficient  of agent~$3$ to $b_3(1) = 1$, which makes $\rho_3 = 1$ and, therefore, $\rho = 1$ (recall definition~\eqref{eqn:defrho}).
 
 As a consequence, the subset of strongly informative agents is the singleton $\mathcal{I} = \{ 3 \}$.
 This means that, asymptotically, the energy of the network signal 
$\widehat{\pmb{\theta}}(k)$ in~\eqref{eqn:edecouple}
 comes from the energy of the local signal $\widehat{\pmb{\theta}}_3(k)$, which itself behaves in accord with Theorem~\ref{theorem:single_agent_energy} as $$\left\| \widehat{\pmb{\theta}}_3(k) \right\|^2 = \frac{A^2}{\sigma^2} (k+1) + o (k+1).$$
 (the residue $r_{3,1}$ appearing in~\eqref{eqn:rho1} is equal to $A/\sigma$ for the current case~\eqref{eqn:noise_model_Z_domain}).
Thus, $$\left\| \widehat{\pmb{\theta}}(k) \right\|^2 = \alpha\,(k+1) + o (k+1),$$
 with $\alpha = A^2 / \sigma^2$.

Figure~\ref{fig:infinite_case} shows the empirical probability of miss detection for two agents: the strongly informative agent $3$ and---to illustrate a weakly informative agent---agent $18$. 
\begin{figure}[h!]
	\centering
	\includegraphics[width=7.6cm]{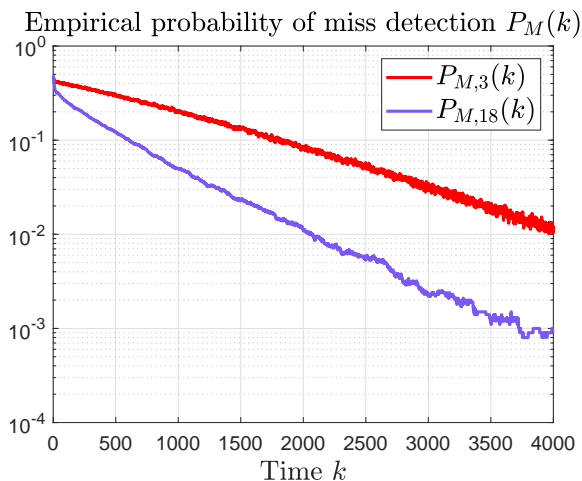}
	\caption{
		 Empirical  probability of miss detection for agent $3$ (strongly informative) and agent~$18$ (weakly informative). All plots are averages over $10^5$ Monte Carlo trials.   The curves $\mathbb{P}_{M,i}(k)$ for the remaining $38$ weakly informative agents are not shown since they are almost identical  to  $\mathbb{P}_{M,18}(k)$.}
	\label{fig:infinite_case}
\end{figure}

In line with the prediction of Theorem~\ref{theorem:energy_theorem},
 Figure~\ref{fig:infinite_case} shows that the probabilities $\mathbb{P}_{M,i}(k)$ decay exponentially fast and satisfy
\begin{align}
\ -\frac{\log \mathbb{P}_{M,i}(k)}{k} \rightarrow \beta_M = \frac{\alpha}{8} = \frac{A^2}{8 \sigma^2} \approx 0.0013.
\label{eqn:risk_convergence}
\end{align}
(At time $k=4000$, the standard deviation of rates $\left\{-\log \mathbb{P}_{M,i}(k)/k \colon 1 \leq i \leq N \right\}$  is approximately $10^{-4}$, and the deviation between the mean rate $-\log \mathbb{P}_{M,i}(k)/k$ and limit $\beta_M$ in~\eqref{eqn:risk_convergence} is approximately $5\times 10^{-4}$). The decay rate in~\eqref{eqn:risk_convergence} is intuitive since it depends on the signal-to-noise ratio $A/\sigma$.

 \subsection{Regime (b): error probabilities settle in a positive floor}
\label{sec_b}

 We fixed $A = 1$ and $\sigma = 5$ and kept all  parameters $b_i(1)$ the same (that is, as in the previous section~\ref{sec:reg_a}) except that we generated $b_3(1)$ uniformly at random in the interval $(-1, 1)$. This means that the poles in~\eqref{eqn:noise_model_Z_domain} have now absolute value strictly less than $1$, that is, $\rho$ in~\eqref{eqn:defrho} satisfies $\rho < 1$; in fact, for our set of $b_i(1)$'s, we have $\rho = 0.94$. Regime (b) is thus activated and all agents are strongly informative.

 As a consequence, Theorem~\ref{eqn:assumpg} predicts that the error probabilities no longer decay to zero; rather, they settle in a positive error floor. For instance, Theorem~\ref{eqn:assumpg} asserts that \begin{align} \mathbb{P}_{F,i}(k) \rightarrow \mathcal{Q}\left(\dfrac{\sqrt{\alpha}}{{2}} \right) \label{eqn:conf} \end{align}
for all agents~$i$; here, $\alpha$ is given by~\eqref{eqn:regBI}, which, in view of~\eqref{eqn:noise_model_Z_domain}, becomes
 \begin{align}
\alpha = \sum_{i=1}^N \frac{A^2}{\sigma^2}\frac{1}{1-b_{1}(i)^2}.
\end{align}
Fr our set of $b_i(1)$'s we get a growth rate of $\alpha \simeq 3.0489$ and a limiting false alarm probability of $\mathcal{Q}\left(\dfrac{\sqrt{\alpha}}{{2}} \right)\approx  0.1913$.

 These predictions can be confirmed by the numerical simulations, depicted in Figure~\ref{fig:finite_case}, which gives the empirical probability of false alarm of some agents. 
 \begin{figure}[h!]
	\centering
	\includegraphics[width=7.5cm]{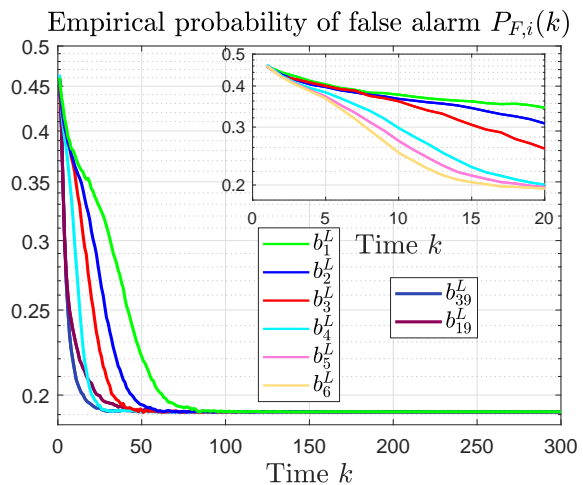}
	\caption{Empirical probability of false alarm  for some agents. All plots are averages over $10^5$ Monte Carlo trials. The coefficients $b_1^L \geq b_2^L \cdots \geq b_N^L$ denote the absolute values of $\{ b_i(1) \colon 1 \leq i \leq N\}$, sorted in non-decreasing order (thus, $b_1^L = \max_{1 \leq i \leq N} | b_i(1)|$ and $b_N^L = \min_{1 \leq i \leq N} | b_i(1) |$). The six agents with the slowest convergence to limit~\eqref{eqn:conf} are the agents with the largest absolute value $| b_i(1) |$.}
	\label{fig:finite_case}
\end{figure}
 (At time $k = 300$, the standard deviation among $\left\{ \mathbb{P}_{F,i}(k) \colon 1 \leq i \leq N \right\}$  is approximately $10^{-16}$, and the deviation between the mean rate $\mathbb{P}_{F,i}(k)$ and limit~\eqref{eqn:conf}  is approximately $10^{-4}$.)

\section{Conclusion}
\label{sec:conclusion}
This paper considers  a distributed detection setup in which both the signal and noise are allowed to follow generic ARMA models, a flexibility that allows to imprint almost any desired temporal correlation in the noise signal by approximating $\mathcal{Z}$ transforms.

To handle this ARMA setup, we suggested an extension of the running consensus detector (RCD) and showed that it can be implemented efficiently at each agent, 
with computational requirements independent of time and proportional to the sparsity of the communication network.

As our theoretical analysis shows, the ARMA setup can lead to a novel regime in the asymptotic behavior of the error probabilities. Specifically, in simple standard (non-ARMA) setups, the probabilities of error of the local detectors typically decay exponentially fast to zero; the ARMA setting, however, can trigger an additional regime in which the probabilities of error converge to a positive error floor. We fully characterize these two regimes by deriving closed-form expressions for both the exponential decay rates and the positive error floors, as a function of the underlying ARMA models. Our analysis also reveals two features not present in standard detection literature. Firstly, the threshold constant used in RCD greatly influences the error floors attained in the additional regime. Secondly, the ARMA setup can lead to a split of agents into strongly informative ones and weakly informative ones, the latter being agents whose measurements can be safely discarded (or not made at all) if so desired, for these measurements do not impact the asymptotic performance of RCD.  Our theoretical findings are confirmed by a set of numerical simulations.
\bibliographystyle{IEEEtran}

\bibliography{IEEEabrv,M335}
\appendix 


We start with the following elementary result, which will be used in the proof of Lemma~\ref{lemma:aux1}:
if $( a(k) )$ is a sequence with non-negative terms (that is, $a(k) \geq 0$ for all~$k$) that satisfies
\begin{align}
    \label{eqn:auxr}
    a(k) \leq r\, a(k-1) + b(k), \quad k \geq 0,
\end{align}
(with $a(-1) = 0$) where $| r | < 1$ and $b(k) \rightarrow 0$, then $a(k) \rightarrow 0$. This elementary result can be shown as follows: because $( b(k) )$ is a convergent sequence, it follows that it is bounded, say, $| b(k) | \leq B$ for all $k \geq 0$; this implies that the sequence $( a (k) )$ is itself bounded: $a(k) \leq A$ for all $k \geq 0$ (for example, it can be shown by induction that $A = B / ( 1 - | r |)$ works); as a consequence, we have $L = \lim\sup_k  a(k) $ to be finite (recall that $\lim\sup_k a(k) = \inf_k A(k)$, where $A(k) = \sup\{ a(k), a(k+1), a(k+2), \ldots \}$); now, from~\eqref{eqn:auxr}, we have $a(k) \leq | r | a(k-1)  + | b(k) |$, and it follows that $L \leq | r | L$ by  the sub-additivity of $\lim\sup$ ($\lim\sup_k \left( \alpha(k) + \beta(k) \right) \leq \lim\sup_k \alpha(k) + \lim\sup_k \beta(k)$) and its shift-invariance ($\lim\sup_k \alpha(k) = \lim\sup_k \alpha(k-1)$), together with $\lim\sup_k b(k) = 0$; finally, intersecting $L \leq | r | L$ with $| r | < 1$ and $L \geq 0$ gives $L = 0$, which is equivalent to $a(k) \rightarrow 0$.

We now turn to the proof of Lemma~\ref{lemma:aux1}. The eigenvalue decomposition~\eqref{eqn:gevd} provides a natural orthonormal basis for $\mathbf{R}^n$, namely, the columns of the the orthonormal matrix $\begin{bmatrix} \frac{1}{ \sqrt{n}} \pmb{1}_n & \mathcal{S} \end{bmatrix}$, thus allowing a generic vector $x \in \mathbf{R}^n$ to be written (uniquely) as $$x =  \frac{1}{\sqrt{n}} \pmb{1}_n \overline x + {\mathcal S} \widetilde x,$$
where $\overline x = \pmb{1}_n^T x / \sqrt{n}$ and $\widetilde x = {\mathcal S}^T x$. In this system of coordinates, showing~\eqref{eqn:lemmaconc} becomes equivalent to showing
\begin{align}
    \frac{\overline v(k)}{f(k)} & \rightarrow \frac{\theta}{\sqrt{n}} \label{eqn:equiv_a} \\
    \frac{\widetilde v(k)}{f(k)} & \rightarrow 0. \label{eqn:equiv_b}
\end{align}

\textbf{Proof of~\eqref{eqn:equiv_a}.} Multiplying both sides of the recursion~\eqref{eqn:template} on the left by $\pmb{1}_n^T \sqrt{n}$ and using $\pmb{1}_n^T \mathcal{W} = \pmb{1}_n^T$ gives the recursion \begin{align} \overline v(k) = \overline v(k-1) + \pmb{1}_n^T u(k) / \sqrt{n}. \label{eqn:ad} \end{align}
Recalling that $\overline v(-1) = v(-1)=0$, we have from~\eqref{eqn:ad} that \begin{align} \frac{\overline v(k)}{f(k)} = \frac{1}{\sqrt{n}} \frac{1}{f(k)} \sum_{m = 0}^k \pmb{1}_n^T u(m). \label{eqn:ad2} \end{align}
Use now the growing rate assumption~\eqref{eqn:ggr} in~\eqref{eqn:ad2} to obtain~\eqref{eqn:equiv_a}.

\textbf{Proof of~\eqref{eqn:equiv_b}.} Multiplying both sides of the recursion~\eqref{eqn:template} on the left by $\mathcal{S}^T$ and using $\mathcal{S}^T \mathcal{W} = \mathcal{D} \mathcal{S}^T$ gives the recursion
\begin{align} \widetilde v(k) = \mathcal{D}\, \widetilde v(k-1) + \mathcal{S}^T u(k), \label{eqn:ad3} \end{align}
or, in components,
\begin{align} \widetilde v_j(k) = d_j\, \widetilde v_j(k-1) + s_j^T u(k), \quad 1 \leq j \leq n-1, \label{eqn:ad4} \end{align}
where $\widetilde v_j(k)$  denotes the $j$th component of the vector $\widetilde v(k)$, $d_j$ is the $j$th diagonal entry of the diagonal matrix $\mathcal{D}$, and $s_j$ is the $j$th column of $\mathcal{S}$.

Dividing~\eqref{eqn:ad4} by $f(k)$ gives
\begin{align} \frac{\widetilde v_j(k)}{f(k)} = d_j\, \frac{f(k-1)}{f(k)} \frac{\widetilde v_j(k-1)}{f(k-1)} + \frac{s_j^T u(k)}{f(k)},  \label{eqn:ad5} \end{align}
    which implies
    \begin{align}
        \left| \frac{\widetilde v_j(k)}{f(k)} \right| & \leq | d_j | \left| \frac{\widetilde v_j(k-1)}{f(k-1)} \right| + \left| \frac{s_j^T u(k)}{f(k)} \right| \\ & \leq  | d_j | \left| \frac{\widetilde v_j(k-1)}{f(k-1)} \right| + \left\| s_j \right\|_\infty \frac{\pmb{1}_n^T u(k)}{f(k)}, \label{eqn:ad6}
    \end{align}
where we used $f(k) > 0$ and $f(k-1) \leq f(k)$ (which holds either in the case $f(k) = 1$ for all $k$, or $f(k) = k+1$ for all $k$) to obtain the first inequality, and we used $a^T b \leq \left\| a \right\|_\infty \left\| b \right\|
_1$ for generic vectors $a = (a_1, \ldots, a_n)$ and $b = (b_1, \ldots, b_n)$ (where $\left\| a \right\|_\infty = \max\{ | a_1 |, \ldots, | a_n | \}$ and $\left\| b \right\|_1 = | b_1 | + \cdots + | b _n |$) to obtain the second inequality (note that $\left\| u(k) \right\|_1 = \pmb{1}_n^T u(k)$ because each $u(k)$ is assumed to be a non-negative vector). Now, note that \begin{align} \frac{\pmb{1}_n^T u(k)}{f(k)} \rightarrow 0. \label{eqn:sr} \end{align}
Indeed, if $f(k) = 1$ for all $k$, then~\eqref{eqn:ggr} shows that the series $\sum_{k \geq 0} \pmb{1}_n^T u(k)$ is convergent (to $\theta$); this implies $\pmb{1}_n^T u(k) \rightarrow 0$, which gives~\eqref{eqn:sr}. If, instead, $f(k) = k+1$ for all $k$, then 
\begin{align}
    \frac{\pmb{1}_n^T u(k)}{k+1} & = \underbrace{\frac{1}{k+1} \sum_{m = 0}^k \pmb{1}_n^T u(m)}_{\alpha(k)} - \frac{k}{k+1} \underbrace{\left( \frac{1}{k} \sum_{m = 0}^{k-1} \pmb{1}_n^T u(m) \right)}_{\beta(k)}. \label{eqn:ad5}
\end{align}
Because from~\eqref{eqn:ggr} we have $\alpha(k) \rightarrow \theta$ and $\beta(k) \rightarrow \theta$, taking the limit $k \rightarrow \infty$ in~\eqref{eqn:ad5} gives~\eqref{eqn:sr}.

Finally, using~\eqref{eqn:sr} and the assumption $| d_j | < 1$, we see that~\eqref{eqn:ad6} fits the template~\eqref{eqn:auxr}, from which we conclude $\widetilde v_j(k) / f(k) \rightarrow 0$ for $1 \leq j \leq n-1$; that is, we obtain~\eqref{eqn:equiv_b}.

\subsection{Proof of Lemma~\ref{lemma:aux2}} 

Our proof uses the well-known bounds
	\begin{align}
	\frac{t}{1+t^2} \exp\left(-\frac{t^2}{2}\right) \leq \sqrt{2\pi}\, \mathcal{Q}(t) \leq \frac{1}{t} \exp\left(-\frac{t^2}{2}\right),
	\label{eqn:bounds_Q_function}
	\end{align} 
 valid for $t > 0$, or, equivalently,
 \begin{align}
\frac{t^2}{2} + \alpha(t) \leq - \log \mathcal{Q}(t) \leq \frac{t^2}{2} + \beta(t), \label{eqn:equivQ}
 \end{align}
 where $\alpha(t) = \log\left( \sqrt{2 \pi} t \right)$ and $\beta(t) = \log\left( \sqrt{2 \pi} \left( t + 1/t \right) \right)$ for $t > 0$.

 The inequality in~\eqref{eqn:equivQ} gives
\begin{align}
\frac{u(k)^2}{2} + \frac{\alpha\left( u(k) \sqrt{v(k)} \right)}{v(k)} \leq - \frac{\log \mathcal{Q}\left( u(k) \sqrt{v(k)} \right)}{v(k)}, \label{eqn:d1}
\end{align}
whose left-hand side verifies
\begin{align}
\frac{u(k)^2}{2} + \frac{\alpha\left( u(k) \sqrt{v(k)} \right)}{v(k)} \rightarrow \frac{\theta^2}{2} \label{eqn:d2}
\end{align}
 because $u(k) \rightarrow \theta$ and $v(k) \rightarrow \infty$.

The second inequality in~\eqref{eqn:equivQ} gives
\begin{align}
 - \frac{\log \mathcal{Q}\left( u(k) \sqrt{v(k)} \right)}{v(k)} \leq 
\frac{u(k)^2}{2} + \frac{\beta\left( u(k) \sqrt{v(k)} \right)}{v(k)}.  \label{eqn:d3}
\end{align}
Now, writing $\beta(t) = \log\left( \sqrt{2 \pi} t \left( 1 + 1/t^2 \right) \right)$ and using the general inequality $\log( 1 + r ) \leq r$ for $r > 0$, we get
$$\log\left( \sqrt{2 \pi} t \right) \leq \beta(t) \leq \log\left( \sqrt{2 \pi} t \right) + 1/t^2,$$
from which it easily follows that $
{\beta\left( u(k) \sqrt{v(k)} \right)}/{v(k)} \rightarrow 0.$
Thus, the right-hand side of~\eqref{eqn:d3} satisfies
 \begin{align}
\frac{u(k)^2}{2} + \frac{\beta\left( u(k) \sqrt{v(k)} \right)}{v(k)} \rightarrow \frac{\theta^2}{2}. \label{eqn:d4}
\end{align}
Results~\eqref{eqn:d1},~\eqref{eqn:d2},~\eqref{eqn:d3}, and~\eqref{eqn:d4} imply~\eqref{eqn:convQ}.

\end{document}